\def\ie{\textit{i.e.}\xspace}
\def\etal{\textit{et al.}\xspace}
\def\etc{\textit{etc.}\xspace}
\def\eg{\textit{e.g.}\xspace}
\newtheorem{theorem}{Theorem}
\renewcommand{\algorithmicrequire}{\textbf{Input:}}
\renewcommand{\algorithmicensure}{\textbf{Output:}}
\begin{document}
\bibliographystyle{plain}
%
% paper title
% can use linebreaks \\ within to get better formatting as desired
\title{Near-Optimal Truthful Auction Mechanisms in Secondary Spectrum Markets}
%\author{Mobihoc Paper \#: 41(1569695967)\vspace{-0.2in}}

\author{\IEEEauthorblockN{Yu-e Sun\IEEEauthorrefmark{1}, He Huang\IEEEauthorrefmark{2}, Xiang-yang Li\IEEEauthorrefmark{3}\IEEEauthorrefmark{4}, Zhili Chen\IEEEauthorrefmark{5}, Wei Yang\IEEEauthorrefmark{5}, Hongli Xu\IEEEauthorrefmark{5}, and Liusheng Huang\IEEEauthorrefmark{5}}
\IEEEauthorblockA{\IEEEauthorrefmark{1}School of Urban Rail Transportation, Soochow University, Suzhou, China\\
\IEEEauthorrefmark{2}School of Computer Science and Technology, Soochow University, Suzhou, China\\
\IEEEauthorrefmark{3}Department of Computer Science, Illinois Institute of Technology, Chicago, USA\\
\IEEEauthorrefmark{4}Tsinghua National Laboratory for Information Science and Technology (TNLIST), Tsinghua University\\
\IEEEauthorrefmark{5}Department of Computer Science and Technology, University of Science and Technology of China, Hefei, China\\
Emails: \{sunye12, huangh\}@suda.edu.cn, xli@cs.iit.edu, \{zlchen3,qubit,xuhongli,lshuang\}@ustc.edu.cn}}
%\alignauthor He Huang\titlenote{Dr.~Trovato insisted his name
%%be first.}\\
%%       \affaddr{School of Urban Rail Transportation}\\
%%       \affaddr{Soochow University}\\
%%       \email{sunye12@suda.edu.cn}
%%\alignauthor Xiang-Yang Li\titlenote{The secretary disavows
%%any knowledge of this author's actions.}\\
%%       \affaddr{Department of Computer Science}\\
%%       \affaddr{Illinois Institute of Technology}\\
%%       \email{xli@cs.iit.edu}
%%\alignauthor Yu-e Sun\titlenote{This author
%%is the one who did all the really hard work.}\\
%%       \affaddr{School of Computer Science and Technology}\\
%%       \affaddr{Soochow University}\\
%%       \email{huangh@suda.edu.cn}
%}

% make the title area
\maketitle

\begin{abstract}
%Many novel solutions have been proposed in the literature to address the spectrum scarcity problem for wireless communications.
%Secondary spectrum market auctions have been shown to be one of effective methods.
In this work, we study spectrum auction problem where each request from secondary users has spatial, temporal, and spectral features.
With the requests of secondary users and the reserve price of the primary user, our goal is to design truthful
mechanisms that will either maximize the social efficiency or maximize the revenue of the primary user.
As the optimal conflict-free spectrum allocation problem is NP-hard, in this work, we design near optimal spectrum
allocation mechanisms separately based on the following techniques: derandomized allocation from integer programming formulation,
its linear programming (LP) relaxation, and the dual of the LP. We theoretically prove that 1) our near optimal allocation methods are
bid monotone, which implys truthful auction mechanisms; and 2) our near optimal allocation methods can achieve a social
 efficiency or a revenue that is at least $1-\frac{1}{e}$ times of the optimal respectively.
%Notice that our allocation mechanisms are general frameworks that
% can  guarantee  a good approximation on either the social efficiency
% or the revenue.
At last, we conduct extensive simulations to study the performances
(social efficiency, revenue) of the proposed methods, and
the simulation results corroborate our theoretical
analysis.

\end{abstract}

% A category with the (minimum) three required fields
\begin{IEEEkeywords}
Spectrum auction, Truthful, Approximation mechanism, Social efficiency, Revenue
\end{IEEEkeywords}

\section{Introduction}
\label{sec:intro}
The growing demand for limited spectrum resource
poses a great challenge in spectrum allocation and usage \cite{chen2009mining}.
One of the most promising methods is spectrum auction, which provides
sufficient incentive for primary user (\emph{a.k.a seller}) to sublease spectrum to
secondary users (\emph{a.k.a buyers}).
The design of spectrum auction mechanisms are facing two major
challenges. First, spectrum channels can be reused in spatial, temporal, and spectral
domain if the buyers are conflict-free with each other, and thus, allocating the requests of buyers in channels optimally is often an NP-hard problem.
Second, truthfulness is regarded as one of the most critical properties, however, it's difficult to
ensure truthfulness in a spectrum auction mechanism with performance guarantee.

Many mechanisms were proposed to address some of these challenges.
For example, \cite{xu2010secondary,wangdistrict,fengtahes,zhou2009trust,zhou2008ebay,zhutruthful,jia2009revenue,gopinathanstrategyproof,wu2011small}
focused on truthfulness and spatial reuse. In \cite{zhou2008ebay}, truthful mechanism is
designed for the spectrum spatial reuse, but no performance guarantee on social efficiency and
 revenue. \cite{al2011truthful} and
 \cite{jia2009revenue} focused on maximizing revenue for the auctioneer, and the social efficiency maximization problem was studied in  \cite{gopinathanstrategyproof} and \cite{zhutruthful}. However, these results did not
  consider the temporal reuse of the spectrum. In practice, the buyer will employ temporal
reuse  to improve the utilization of spectrums. Following in this direction, temporal
reuse was considered in~\cite{xu2010salsa,dong2012combinatorial,wang2010toda,xu2011tofu,deek2011preempt}.
However, most of the studies that focused on temporal reuse assume that the conflict graph of buyers' geometry locations
is a completed graph for each channel, which means there is no
spatial reuse.
Feng \etal \cite{fengtahes} studied the case that spectrum can be reused both in spatial
and in temporal domains, and proposed a truthful double auction mechanism for spectrum. Nevertheless, performance
guarantee is  neglected in \cite{fengtahes}.
To the best of our knowledge, there is no truthful spectrum
auction mechanism with performance guarantee, in which spectrums can be reused both in spatial and temporal domains.

To tackle this challenge, we propose a truthful auction framework
for practical spectrum markets in this paper. Maximization of the \emph{\textbf{social efficiency}},
\ie allocating a channel to buyers who \textbf{value} it most, and
maximization of the \emph{\textbf{expected revenue}}, \ie
allocating a channel to buyers who \textbf{pay} it most, both are the
nature goals for spectrum auctions.
Thus, we aim at designing a framework that can flexibly choose the
 optimization goal between social efficiency
 and expected revenue.
Then, we propose a set of channel allocation mechanisms which
can either maximize the social efficiency or the expected revenue, and payment
schemes which ensure the truthfulness of our framework. To the best of
our knowledge, we are the
first to design truthful spectrum auctions while considering spatial and temporal reuse with
performance guarantee.

In this work, we design a
framework for spectrum auction
which can flexibly choose the optimization goal and the channel
allocation and payment mechanism.
Since channels can be reused in both spatial and temporal domain, the
problem of allocating requests of buyers
in channels optimally to maximize the social efficiency or the
expected revenue is NP-hard. It is
more complex than the problem of matching requests and channels
optimally only in spatial or
temporal domain. To tackle this challenge, we first relax the
integer programming formulation of the channel allocation
problem into a linear program (LP) problem,
which is solvable in polynomial time. A fractional solution for
channel allocation can be
obtained by solving this LP optimally. Then, we transform this fractional solution into a
feasible integer solution of the original channel allocation problem
by using a carefully designed randomized rounding procedure that
ensures the feasibility of the solution and good approximation to the
objective functions. We prove that the \textbf{expected} weight of the feasible integer
solution is at least $1-1/e $ times of the weight of the optimal solution. To complete our allocation
mechanism, we also propose a derandomization algorithm to get a feasible solution whose weight
is always guaranteed to be  at least  $1-1/e $ times of the weight of the optimal solution.
%However, the derandomization
%algorithm should be bid-monotone to ensure the truthfulness of buyers. We cannot prove or disprove
%the bid-monotone property of our first derandomization algorithm.
Then, we propose a revised derandomization algorithm and prove that this new
allocation method does satisfy the bid-monotone property, thus, implying a truthful mechanism.
To ensure that the payment mechanism is also solvable in polynomial time,
we further design a channel allocation and payment mechanism CATE, which is truthful in expectation. We prove that the expected weight of CATE's solution is
larger than $1-1/e$ times of the optimal.
We point out that our
allocation mechanisms can either approximate the social efficiency or
the expected revenue, but not both simultaneously.

The rest of paper is organized as follows:
Section \ref{sec:prelim} introduces the preliminaries and our design
 objectives.
Section \ref{sec:framework} presents our mechanism design framework
for optimizing the social efficiency or the revenue of the seller.
In Section \ref{sec:lp}, we
 propose our allocation algorithm, which is based on derandomization of
 solutions from linear programming.
Section \ref{sec:simulation} presents our extensive simulations for evaluating the social efficiency, revenue, and spectrum
 utilization efficiency of our methods. We discuss the related literatures in Section \ref{sec:review}, and conclude the paper in Section \ref{sec:conclusion}.

\section{Preliminaries}\label{sec:prelim}

\subsection{Spectrum Auction Model}

Auctions in our model are executed periodically. In each round, the primary user subleases the access right of $m$ channels in the fixed areas
 during time interval $[0,T]$, and $n$ buyers request the usage of channels in fixed time intervals and geographical locations/areas. Our goal is
  to allocate these requests of buyers in channels, such that we maximize either the social efficiency or the expected revenue.

Assume each channel provided by the primary user has a set of conflict-free license areas, and the primary user only sells
the rights to access his under-used channels in their license areas.
Moreover, these license areas between different channels may have
intersections. To make our model more general, we consider two models of the requests of buyers. The first one is the \emph{Point model},
in which each buyer requests the usage of channels in a particular geographical location and during a fixed time interval.
 The second one is \emph{Area model}, in which each buyer
 requests the usage of channels in a particular geographical area and also during a fixed time interval.

We use $\mathcal{S}$ to denote the set of channels, and define each channel $s_j \in
\mathcal{S}$ as $s_j =(R_j ,A_j )$, where $A_j$ is its license area, and  $R_j$ is the interference radius of a transmission when a user transmits in channel $s_j $.
Let $\mathcal{B}$ be the set of buyers, in which each buyer $i \in \mathcal{B}$ is assumed to have a
request $r_i$. Let $\mathcal{R}$ be the set of requests of buyers. Each request $r_i\in \mathcal{R}$ is defined as $r_i =(L_i ,
 b_i, v_i ,a_i ,t_i ,d_i)$, where $L_i $ is $i$'s geographical location in \emph{Point model}
  or the area where $i$ wants to access the channel in
 the \emph{Area model}, $b_i $ denotes its bidding
price, $v_i$ stands for its true valuation, and $a_i $, $d_i $, and $t_i $ denote the arrival time, deadline, and
duration time (or time length), respectively. In this paper, we only consider the case of
$d_i -a_i =t_i $, which means that the time request from the buyer is a fixed time
interval. We leave the case of $d_i -a_i >t_i $ as the future work.

We say that two requests $r_i $ and $r_k $ conflict with each other, if they
satisfy the following constrains: (1) the distance between $L_i $ and $L_k $
is smaller than twice of the interference radius in the Point model, or $L_i\bigcap L_k \ne \emptyset$ in the Area model; and (2) the required
time intervals from $r_i $ and $r_k $ overlap with each other. We denote the conflict relationships
among requests by a conflict graph $\mathcal{G}=(\mathcal{V}, \mathcal{E})$, where $\mathcal{V}$ is the set of requests of buyers, and edge
$(r_i,r_k) \in \mathcal{E}$ if requests $r_i$ and $r_k$ conflict with each other. Note that, for the same requests  $r_i$ and $r_k$, different interference radius of channels will lead to a different conflict relationship.
We use a matrix $Y=(y_{i,k,j} )_{n\times n\times m} $ to represent the conflict relationships in graph $\mathcal{G}$, in which if
requests $r_i $ and $r_k $ conflict with each other in channel $s_j$, $y_{i,k,j}=1$;
otherwise, $y_{i,k,j}=0$. Since the spectrum is a local resource, we also need
to define a location matrix $C=(c_{i,j} )_{n\times m} $ to represent whether
$L_i $ is in the license regions of channel $s_j $. $c_{i,j}=1$ if $L_i $ is in the license regions of channel $c_j
$; otherwise, $c_{i,j} =0$. Therefore, two requests $r_i $ and $r_k $
can share channel $s_j $ only if $y_{i,k,j}=0$, and $c_{i,j}=1$, $c_{k,j}=1$.

\subsection{Problem Formulation}
The objective of our work is to design a mechanism
satisfying truthfulness constraint, while maximizing the \emph{social efficiency} or \emph{revenue}.
An auction is said to be truthful if revealing the true
valuation is the \emph{dominant strategy} for each buyer, regardless of other
buyers' bids. It has been proved that
%That is to say, an auction is truthful implies that no player can improve its own utility by bidding untruthfully.
%According to the results in \cite{nisan2007algorithmic},
 an auction mechanism is truthful if its allocation algorithm is
monotonic and it always charges critical values from its buyers \cite{nisan2007algorithmic}.
The \emph{critical value} for a buyer is the minimum bid value, with which the buyer will win the auction. In our
problem definition, truthfulness implies two aspects:
\begin{compactenum}
\item Buyers report their true valuations for the spectrum channels (called \emph{\textbf{value-Truthfulness}})
\item Buyers report their true required time intervals (called \emph{\textbf{time-Truthfulness}}).
\end{compactenum}

\emph{\textbf{Social Efficiency Maximization:}} Social efficiency for an auction
mechanism $\mathcal{M}$ is defined as $\max \sum\nolimits_{r_i\in \mathcal{R}} {v_i x_i (\mathcal{M})}$,
where $x_i(\mathcal{M})=1$ if buyer $i$ wins; otherwise, $x_i(\mathcal{M})=0$.

\emph{\textbf{Revenue Maximization:}} The revenue of an auction is the total payment of buyers.
 An auction maximizing the revenue for the auctioneer is known
as an optimal auction in economic theory \cite{myerson1981optimal}. In the optimal auction, Myerson introduces the notion of virtual valuation $\phi _i (b_i )$ as
\begin{equation}
\label{eq1}
\phi _i (b_i )= b_i -\frac{1-F_i (b_i )}{f_i (b_i )}
\tag{1}
\end{equation}
where $F_i (b_i )$ is the probability distribution function of true
valuations of buyer $i$, and $f_i (b_i )= \frac{dF_i (b_i)}{db_i}$.
%is the probability density function of $F_i (b_i )$.
%\begin{comment}
%\begin{theorem}
%\label{theo:myerson}
%[Myerson, 1981] Given the valuation probability
%distribution ${\rm {\bf F}} = F_1 \times F_2 \times ...\times F_n $,
%and the truthful bids $ \{v_1 ,v_2 ,...,v_n \}$, in which $v_i $
%is drawn independently from $F_i $ ($i\in \mathcal{B}$), the expected revenue of
%any mechanism with allocation of goods ${\rm {\bf X}} =\{ x_1 ,x_2
%,...,x_n \} $ is given by
%\[
%R=\sum\limits_{r_i\in \mathcal{R}} {\phi _i (b_i )} \cdot x_i
%\]
%\end{theorem}
%\end{comment}

According to the theory of optimal auction\cite{myerson1981optimal}, maximizing revenue is equivalent to finding the optimal
solution of $\max\sum\nolimits_{r_i\in \mathcal{R}} {\phi _i (b_i )x_i (\mathcal{M})}$. Notice that $F_i $ should be regular
for each buyers $i$, that is, $\phi _i (b_i)$ is monotone
non-decreasing in $b_i $. In fact, this requirement is mild, and most natural
distributions of interest (uniform, exponential, Gaussian \etc) are regular.

\section{A Strategyproof Spectrum Auction Framework}\label{sec:framework}
In this section, we propose a general truthful spectrum auction
framework with the goal of maximizing social efficiency or revenue, as shown
in \emph{Algorithm \ref{alg:1}}. In our framework, we can flexibly choose different
optimization targets according to the practical requirements of auction
problems. The details are depicted as follows.

\renewcommand{\algorithmicrequire}{\textbf{Input:}}
\renewcommand{\algorithmicensure}{\textbf{Output:}}

{\small
\begin{algorithm}
\caption{Our truthful spectrum auction framework}\label{alg:1}
\begin{algorithmic}[1]

\REQUIRE ~conflict graph $\mathcal{G}$, location matrix $C$, set of channels $\mathcal{S}$, set of requests $\mathcal{R}$, monotone allocation and payment mechanism $\mathcal{A}$;\\

\ENSURE ~channel assignment $X$, payment $P$;

\STATE $\mathcal{R'}= \mathcal{R}$;
\FOR {each $r_i \in \mathcal{R}$}
\STATE $p_i=0$;
\IF {the target is maximization of social efficiency}
\STATE $\phi_i(b_i) = b_i$;
\ELSE
\STATE $\phi_i(b_i) =b_i - \frac{1-F_i (b_i)}{f_i (b_i)}$;
\IF {$\phi_i(b_i)< \eta^{\phi}t_i$}
\STATE $\mathcal{R'}= \mathcal{R'}/{r_i}$;
\ENDIF
\ENDIF
\ENDFOR
\STATE Run $\mathcal{A}$ using the set of virtual bids $\{\phi _i(b_i) \}_{r_i \in \mathcal{R'}}$;
\STATE Let $X= (x_i )_{r_i \in \mathcal{R'}}$ be the channel allocation and $\tilde{P} =(\tilde{{p}_i})_{r_i \in \mathcal{R'}}$ be the corresponding payment returned by $\mathcal{A}$;
\FOR {each $x_i = 1$}
\IF {the target is maximization of social efficiency}
\STATE $p_i = \tilde {p}_i$;
\ELSE
\STATE $p_i =\phi _i^{-1}(\tilde{p_i})$;
\ENDIF
\ENDFOR
\RETURN $(X,P)$;
\end{algorithmic}
\end{algorithm}
}

At the beginning of every auction period, we choose a particular
optimization target. If we choose the social efficiency maximization as our
target, we let the virtual bid $\phi _i(b_i) =b_i $. Then, we use the set of virtual
bids ${\bf{\Phi}}=(\phi _i(b_i) )_n $ as input to the channel allocation and payment
calculation mechanism $\mathcal{A}$. $\mathcal{A}$ returns an optimal or feasible
channel allocation ${\rm {\bf X}}{\kern 1pt}\ = (x_i )_n $, which
maximizes $\sum\nolimits_{r_i\in \mathcal{R}} {\phi _i(b_i) x_i } $. In ${\rm {\bf X}}$,
$x_i =1$ means that buyer $i$ wins the auction, while $x_i =0$ means
it loses. Meanwhile, $\mathcal{A}$ also returns a corresponding payment vector ${\rm
{\bf \tilde {P}}}\mbox{=(}\tilde {p}_i )_n $, and we charge each buyer $p_i
=\tilde {p}_i $.

If we choose to maximize the revenue of the auctioneer, we
convert the bid of each buyer into its corresponding virtual bid by setting $\phi _i(b_i) =b_i -\frac{1-F_i (b_i)}{f_i (b_i )}$.
Then, we can use the same allocation mechanism $\mathcal{A}$
in the case of social efficiency maximization to maximize $\sum\nolimits_{r_i\in \mathcal{R}} {\phi _i(b_i) x_i } $.
%we convert the bid of each buyer into its corresponding virtual bid by using the
%distribution function of $b_i $, \ie let $\phi _i(b_i) =b_i -\frac{1-F_i (b_i
%)}{f_i (b_i )}$.  Then, we find an optimal or feasible channel allocation ${\rm {\bf X}}=(x_i )_n $ which maximizes
%$\sum\nolimits_{r_i\in \mathcal{R}} {\phi _i(b_i) x_i } $. Here, we can use the same allocation mechanism $\mathcal{A}$
%in the case of social efficiency maximization.
To ensure the worst case profit, we assume that the primary user already set a virtual
reservation price $\eta^{\phi}$, which is the minimum virtual price for spectrums \emph{per unit time}.
We take the requests whose virtual bid is larger than $\eta^{\phi}t_i$ as the input of $\mathcal{A}$,
and get an allocation vector ${\rm {\bf X}}$ and the corresponding payment
vector ${\rm {\bf \tilde {P}}}$. Different from the former target, the
payment vector ${\rm {\bf \tilde {P}}}$ we get in this case is virtual
payments of buyers. Therefore, we need to convert the virtual payments back
into the actual payments for buyers by the conversion of $p_i =\phi_i^{-1} (\tilde {p}_i )$.

As we have seen, if mechanism $\mathcal{A}$ is a monotonic allocation, and it always charges each winning buyer
its critical value, the proposed auction framework is truthful. In the following,
%Thus, we will show how to design the mechanism $\mathcal{A}$ in the next section, which ensures the truthfulness of our auction framework.
% In order to achieve truthfulness,
 we will show how to design a monotonic allocation method $\mathcal{A}$, which charges winners their critical
values.

\section{Allocation Mechanism with Approximation Ratio
  (1-1/\textnormal{e})}\label{sec:lp}

In this section, we propose the channel allocation mechanisms $\mathcal{A}$ under
our spectrum auction framework. We first present an optimal solution to the channel allocation
problem that maximizes the total bids or virtual bids of secondary users.
However, solving this channel allocation problem optimally is NP-hard. To
address this, we further design a set of $(1-1/e)$ approximation mechanisms
which can be solved in polynomial time. By using the \emph{LP relaxation}
technique, we first propose a deterministic mechanism (DCA) to get a solution
whose weight is larger than $1-1/e$ times of the optimal.
Then, we design an revise version of DCA, which called MDCA, to make sure
that our channel allocation mechanism is bid monotone. To ensure that our payment mechanism
can also be solved in polynomial time, we
further design mechanism CATE, which is truthful in expectation.

\subsection{The Optimal Channel Allocation}

In the channel allocation problem, we need to match the requests and channels
optimally under their constraints. For each request $r_i$, it can only be
allocated in the time slice between $a_i$ and $d_i$. And for each channel
$s_j$, it can only allocate time slices to the requests which are in its entire
license area. Moreover, we can only allocate a fixed time slice to the
requests conflict-free with each other. In order to
simplify the matching model between requests and channels, we would like to
segment the available time of each channel into many time slices. Recall
that the available time of each channel is [0,T] in each auction
period. Then, we use the arrival time $a_i$ and deadline $d_i$ of each request $r_i$
to partition the time interval [0,T]. Each arrival time/deadline of
requests divides the time axis of one channel into two parts.
As shown in Figure \ref{segmentation}, the arrival time and deadline of requests $r_1$, $r_2$ and $r_3$
divide the time interval [0,T] into 7 time slices.
Suppose there are $n$ requests, we can easily get that the time interval [0,T] can be divided into no
more than $2n+1$ time slices.

\input{segmentation.TpX}
%\begin{figure}[!t]
%\centering
%\includegraphics[width=8cm]{}
%\caption{An instance of the time interval segmentation.} \label{fig:segmentation}
%\end{figure}

After the introduction of segmentation process, we give the detailed description of the channel allocation
problem. First, for each partitioned time slice derived from channel $s_j$, it can
only be allocated to the requests within the license area of channel $s_j$.
Let $x_{j,i}^l$ represent whether the \emph{$l$-th} time slice of channel $s_j$
is allocated to the request $r_i$, then we get a constraint $x_{j,i}^l \le
c_{i,j}$. Second, each time slice can only be allocated to requests
conflict-free with each other. Thus, we get another constraint
$\sum\nolimits_{k\ne i} {x_{j,k}^l } y_{i,k,j} + x_{j,i}^l \le
1$. Let $t_j^l $ be the length of \emph{$l$-th} time slice in channel $s_j $.
Modify $a_i $ to be the first time slice $r_i $ wants to use, and $d_i $ to be the
last time slice $r_i $ wants to use. Moreover, if we allocate request $r_i $
in channel $s_j $, the time assigned to request $r_i $ from channel
$s_j $ should be equal to the required time of request $r_i $, so we get
$\sum_{l = a_i }^{d_i } {x_{j,i}^l t_j^l } = t_i
x_{i,j}$. From the analysis above, the allocation problem can be
formulated as follows.
\begin{equation*}
\max \sum\nolimits_{s_j\in \mathcal{S}} {\sum\nolimits_{r_i\in \mathcal{R'}} {\phi _i(b_i)}x_{i,j}},
%\mbox{ subject to},
\tag{IP (1)}
\end{equation*}
subject to
\begin{equation*}
\begin{cases}
\sum\nolimits_{s_j\in \mathcal{S}} {x_{i,j} } \le 1, \forall r_i\in \mathcal{R'}  \\
x_{j,i}^l \le c_{i,j}, \forall s_j\in \mathcal{S}, \forall r_i\in \mathcal{R'}, \forall l \\
\sum\nolimits_{k\ne i} {x_{j,k}^l} y_{i,k} + x_{j,i}^l \le 1, \forall s_j\in \mathcal{S}, \forall r_i\in \mathcal{R'}, \forall l \\
\sum\limits_{l=a_i }^{d_i } {x_{j,i}^l t_j^l } =t_i x_{i,j}, \forall s_j\in \mathcal{S}, \forall r_i\in \mathcal{R'}\\
x_{i,j}\in \{0,1\}, \forall s_j\in \mathcal{S},\forall r_i\in \mathcal{R'} \\
x_{j,i}^l \in \{0,1\}, \forall s_j\in \mathcal{S},\forall r_i \in \mathcal{R'}, \forall l \\
\end{cases}
\end{equation*}
where $x_{i,j}$ stands for whether channel $s_j$ is allocated to
request $r_i$ or not, $y_{i,k,j}$ represents whether request $r_i$
conflicts with request $r_k$ or not.

If we can get the optimal solution of integer programming IP(1), we can apply the best-known VCG
mechanism to design a truthful auction mechanism. In VCG mechanism, the
winner determination is to maximize the sum of winning bids, and the payment from
each buyer is the opportunity cost that its presence introduces to all the
other players. Assume that ${\rm {\bf X}}_{opt} =(x_k )_n$ is the
optimal solution of IP(1), where $x_k =\sum\nolimits_{s_j\in \mathcal{S}}
{x_{k,j} }$, and ${\rm {\bf X}}_{opt}$ is the allocation vector. For each $x_i =1$, the corresponding payment $\tilde
{p}_i$ is:
\begin{equation}
\label{eq2}
\tilde{p}_{i} =\mathop{\max }\limits_{X_{-i}}
\sum\limits_{k\ne i}{x_k
\phi_{k}(b_k)} - \mathop{\max}\limits_{X} \sum\limits_{k\ne
i} {x_k \phi_{k}(b_k)}
\tag{2}
\end{equation}

VCG mechanism guarantees the monotone allocation, and it always
charges each winner its critical value, so the resulted auction mechanism is
truthful.
As the maximum weighted independent set problem is a special case of the channel allocation, thus we have

\begin{theorem}
\label{theo:NP}
The social efficiency maximization or revenue maximization channel allocation problem  is  NP-hard.
\end{theorem}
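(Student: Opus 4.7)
The plan is a direct polynomial reduction from maximum weighted independent set (MWIS), which is well known to be NP-hard even on unit disk graphs. Both objectives reduce to optimizing the common integer program IP(1); the only difference is the coefficient $\phi_i(b_i)$, which is a constant once the bids and the distributions are fixed. Hence it suffices to prove that IP(1) with arbitrary positive coefficients on $x_{i,j}$ is NP-hard, because (i) for social efficiency we simply set $\phi_i(b_i)=b_i$ and (ii) for revenue, for any target vector of coefficients we can pick a regular distribution $F_i$ so that Myerson's virtual value $\phi_i(b_i)=b_i-(1-F_i(b_i))/f_i(b_i)$ realizes it.

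The reduction goes as follows. Given an MWIS instance $G=(V,E)$ with weight $w_v>0$ on each vertex, I build a channel-allocation instance with a single channel $s_1$ whose license area covers the whole plane, and a single common time interval $[0,T]$ shared by every request (so after the segmentation in Section~\ref{sec:lp} there is exactly one time slice, and the temporal constraints become trivial). For each $v\in V$ I create one request $r_v$ with bid $b_v=w_v$, arrival $0$, deadline and duration $T$, and a spatial footprint $L_v$ chosen so that the induced conflict graph on $\{r_v\}_{v\in V}$ is exactly $G$. Under this construction, the constraints of IP(1) collapse to: pick at most one $x_{v,1}\in\{0,1\}$ per vertex subject to $x_{u,1}+x_{v,1}\le 1$ for every $(u,v)\in E$; the objective $\sum_v w_v x_{v,1}$ is maximized precisely when $\{v:x_{v,1}=1\}$ is a maximum-weight independent set in $G$.

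The only delicate step is realizing an arbitrary conflict graph geometrically. Under the Area model this is immediate: for each edge $(u,v)\in E$ pick a distinct point $p_{uv}$ in the plane, and let $L_v$ be (a small neighborhood of) $\{v\}\cup\{p_{uv}:uv\in E\}$; then $L_u\cap L_v\ne\emptyset$ iff $uv\in E$, giving the required conflict pattern exactly. For the Point model, arbitrary graphs are not realizable by a single interference radius, but MWIS is NP-hard already on unit disk graphs, so I instead reduce from that restricted version: place each vertex at its unit-disk center and choose $R_1$ equal to half the unit distance, so that two requests conflict iff their disks intersect, matching the given unit disk graph. In both models the reduction is clearly polynomial.

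The main obstacle I would expect is precisely this geometric realizability issue, and the argument above dispatches it by splitting into the two models and invoking the already-known hardness of MWIS on unit disk graphs for the Point case. Everything else is routine: IP(1) is in NP (a candidate assignment can be checked in polynomial time against the polynomially many constraints), and the construction encodes IP(1) with the MWIS instance as a special case for both objective choices, so solving either the social-efficiency or the revenue variant of the channel allocation problem in polynomial time would solve MWIS in polynomial time, contradicting $\mathrm{P}\ne\mathrm{NP}$.
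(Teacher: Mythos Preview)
Your proposal is correct and follows essentially the same approach as the paper: both reduce from maximum weighted independent set by restricting to a single channel, so that the allocation problem collapses to MWIS on the conflict graph. Your treatment is in fact more careful than the paper's, which simply asserts that the single-channel case \emph{is} MWIS without discussing geometric realizability; your split into the Area model (realizing an arbitrary graph via overlapping regions) and the Point model (invoking NP-hardness of MWIS on unit disk graphs) fills that gap.
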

\begin{proof}
For a simple case that there is one channel in our model, we need to allocate requests in this channel with conflict graph $\mathcal{G}$ , and the goal of our channel allocation mechanism is to maximize the sum of the virtual bids. Then, the channel allocation problem is a maximum weighted independent set problem in this case, which is an NP-hard problem. \end{proof}

Then solving the integer programming IP(1) is an NP-hard problem, which implies that VCG mechanism
is difficult to be applied to actual auctions. In order to tackle the NP-hardness, we employ \emph{LP relaxation} methods,
and further design a set of polynomial time solvable channel allocation mechanisms with an approximation factor of $1-1/e$.

\subsection{(1-1/e)-Approximation  methods}

\emph{LP relaxation} technique can be introduced to solve NP-hard problems,
and it often leads to a good approximation algorithm. We release IP(1) to
linear programming LP(2) by replace $x_{i,j}\in \{0,1\}$ with $0\le x_{i ,j} \le 1 $, and
replace $x_{j,i}^l \in \{0,1\}$ with $0 \le x_{j,i}^l \le 1$.
%\begin{comment}
The allocation problem can be reformulated as:

{\setlength{\abovedisplayskip}{1pt}
\setlength{\belowdisplayskip}{1pt}
%%\begin{small}
\begin{equation*}
\setlength{\abovedisplayskip}{1pt}
\setlength{\belowdisplayskip}{1pt}
\max \sum\nolimits_{s_j\in \mathcal{S}} {\sum\nolimits_{r_i\in \mathcal{R'}} {\phi _i(b_i)}x_{i,j}}
\tag{LP(2)}
\end{equation*}
%%\end{small}
subject to
%%\begin{small}
\begin{equation*}
\begin{cases}
\sum\nolimits_{s_j\in \mathcal{S}} {x_{i,j} } \le 1, \forall r_i\in \mathcal{R'} \\
x_{j,i}^l \le c_{i,j}, \forall s_j\in \mathcal{S}, \forall r_i\in \mathcal{R'} \\
\sum\nolimits_{k\ne i} {x_{j,k}^l } y_{i,k}+x_{j,i}^l \le 1, \forall s_j\in \mathcal{S}, \forall r_i\in \mathcal{R'}, \forall l \\
\sum\limits_{l=a_i }^{d_i } {x_{j,i}^l t_j^l }=t_i x_{i,j}, \forall s_j\in \mathcal{S}, \forall r_i\in \mathcal{R'} \\
0\le x_{i,j} \le 1, \forall s_j\in \mathcal{S}, \forall r_i\in \mathcal{R'} \\
0\le x_{j,i}^l \le 1, \forall s_j\in \mathcal{S}, \forall r_i\in \mathcal{R'}, \forall l \\
\end{cases}
\end{equation*}}
%%\end{small}}
%\end{comment}

Recall that the number of time slices is no more than $2n+1$ for each channel, so LP(2) has a polynomial number of variables and
constraints, and can be solved optimally in polynomial time.

\subsubsection{Randomized Rounding}

Suppose $O_{LP2}$ is the optimal solution of LP (2), we apply a standard
randomized rounding on it to obtain an integral feasible solution
$f_{IP1} $ to IP (1). The rounding procedure is presented as follows.
\begin{compactenum}
\item Randomly choose a channel $s_j$, randomly choose a request $r_i$ with $x_{i,j} >0$, and set $x_{i,j} =1$;
\item If $x_{i,j} =1$, set $x_{k,j} =0$ for all requests $r_k $ with $y_{i,k,j}=1$;
\item If $x_{i,j} =1$, set $x_{i,k} =0$ for all channels with $k\ne j$.
\item Repeat steps 1 to 3 until all requests have been processed.
\end{compactenum}

Through the randomized rounding procedure above, the optimal solution of LP(2) is
converted into a feasible solution of IP(1). Let $w_{O_{LP2} } $ be the
weight of $O_{LP2} $, and let $E(w_{f_{IP1} } )$ be the expected weight of $f_{IP1}
$. We show in Theorem \ref{theo:lp2} that $E(w_{f_{IP1}})\ge (1-1/e)w_{O_{LP2}} $.

\begin{theorem}
\label{theo:lp2}
The expected weight of the rounded solution is at least $1-1/e$ times of the weight of the optimal solution to LP (2).
\end{theorem}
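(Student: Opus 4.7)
The plan is to bound, for each request $r_i \in \mathcal{R}'$, the probability that the rounding procedure eventually assigns $r_i$ to some channel, and then apply linearity of expectation. Writing $x_{i,j}^*$ for the LP-optimal values of $O_{LP2}$, once I establish
\begin{equation*}
\Pr[r_i \text{ is allocated}] \;\ge\; (1-1/e)\sum_{s_j \in \mathcal{S}} x_{i,j}^*
\end{equation*}
for every $r_i \in \mathcal{R}'$, weighting by $\phi_i(b_i)$ and summing immediately yields $E(w_{f_{IP1}}) \ge (1-1/e)\,w_{O_{LP2}}$, which is the claim.

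For the per-request bound I would interpret the random choices in Step 1 of the rounding as drawing each pair $(r_i, s_j)$ with probability proportional to $x_{i,j}^*$. The LP constraints guarantee that this is a well-defined distribution: $\sum_{s_j} x_{i,j}^* \le 1$ keeps the per-request mass bounded, while $\sum_{k \ne i} x_{j,k}^{l*}\, y_{i,k,j} + x_{j,i}^{l*} \le 1$ together with the time-coupling equality $\sum_l x_{j,i}^{l*} t_j^l = t_i x_{i,j}^*$ keeps the per-slot mass bounded. Viewing the per-channel outcomes as (conditionally) Bernoulli trials in which $r_i$ is successful on channel $s_j$ with probability at least $x_{i,j}^*$, the probability that $r_i$ is missed on every channel is at most $\prod_{s_j}(1 - x_{i,j}^*)$.

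Combining this with two standard inequalities finishes the argument: the AM--GM-type bound $\prod_{s_j}(1 - x_{i,j}^*) \le \exp\bigl(-\sum_{s_j} x_{i,j}^*\bigr)$, and the scalar inequality $1 - e^{-s} \ge (1-1/e)\,s$ valid for $s \in [0,1]$, which applies because $\sum_{s_j} x_{i,j}^* \le 1$. Chaining them,
\begin{equation*}
\Pr[r_i \text{ is allocated}] \;\ge\; 1 - e^{-\sum_{s_j} x_{i,j}^*} \;\ge\; (1-1/e)\sum_{s_j} x_{i,j}^*,
\end{equation*}
and the theorem follows by summing over $r_i$.

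The main obstacle will be justifying the ``conditionally Bernoulli'' step cleanly: Steps 2 and 3 of the rounding deterministically zero out conflicting requests and alternative channels once an assignment is committed, so the per-channel trials are not literally independent. I expect the induced dependence to be \emph{negative} --- removing a conflicting competitor can only help $r_i$'s future chances --- which suggests a coupling against an idealized independent process preserves the miss-probability upper bound $\prod_{s_j}(1 - x_{i,j}^*)$. Making this coupling rigorous, while tracking how the residual fractional mass evolves after each round and verifying that the conflict-slot constraints remain respected under any realization, is where essentially all of the real work lies.
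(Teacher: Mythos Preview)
Your approach is essentially the paper's: bound $\Pr[r_i\text{ allocated}]\ge (1-1/e)\sum_j x_{i,j}^*$ per request via the product formula, then sum. The paper uses a slightly different scalar chain---AM--GM to get $\prod_j(1-x_{i,j}^*)\le (1-x_i^*/h)^h$ with $h=|\{j:x_{i,j}^*>0\}|$, then observes $\tfrac{1}{x_i^*}\bigl(1-(1-x_i^*/h)^h\bigr)$ is decreasing in $x_i^*$, hence bounded below by $1-(1-1/h)^h\ge 1-1/e$---rather than your $1-x\le e^{-x}$ followed by $1-e^{-s}\ge(1-1/e)s$; the two routes are interchangeable. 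The dependence issue you flag as ``the main obstacle'' is simply not addressed in the paper: it asserts outright that the non-allocation probability equals $\prod_j(1-x_{i,j}^*)$ as if the per-channel trials were independent, with no coupling or negative-correlation argument. So you have already matched the paper's level of rigor; the extra work you anticipate is real but the paper does not supply it.
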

\begin{proof}
For each request $r_i$, let $H=\{s_j \in \mathcal{S}:x_{i,j} >0\}$ be the set of channels $s_j \in \mathcal{S}$ with $x_{i,j} >0$, and let $h=|H|$.
Clearly, $0\le h\le m$. The probability that request $r_i$ is not allocated in any channel by $f_{IP1}$ is $\prod\limits_{j=1}^h{(1-x_{i,j} )}$.
Let $q_i$ denote the probability that request $r_i$ is allocated in one of the $h$ channels by $f_{IP1}$. Then, we get that
 $q_i=1-\prod\limits_{j=1}^h {(1-x_{i,j} )}$. It's obvious that $E(w_{f_{IP1}} )=w_{O_{LP2} }$ when $h=0$ or 1.
 Thus, we only consider the case $h\ge 2$ in the following. In this case, $q_i $ is minimized when $x_{i,j} =x_{i}/h$. Then, we have $q_i \ge 1-(1-x_{i}/h)^{h}$, and

\begin{equation}
\label{eq3}
\frac{q_i}{x_i} \ge \frac{1}{x_i}(1-(1-(x_{i}/h)^h) \ge  (1-(1- 1/h)^h) \ge 1-\frac{1}{e}
\tag{3}
\end{equation}

%\begin{comment}
The right side of the inequality is a monotonically decreasing function depending on $x_i $, with $0\le x_i \le 1$. Thus, it is minimized when $x_i=1$, and we have

\begin{equation}
\begin{array}{r}
\displaystyle \frac{q_i}{x_i} \ge \frac{1}{x_i}(1-(1-{x_i}/h)^h) \ge (1-(1- 1/h)^h) \\
\displaystyle     \ge 1-\frac{1}{e}+\frac{1}{32h^2} \ge  1-1/e
\tag{4}
\end{array}
\end{equation}
%\end{comment}

For each request $r_i$ with $q_i >0$, its
contribution in the expected weight of the rounded solution is $q_i \phi_i(b_i) $,
and that in the weight of the optimal solution of LP(2) is $x_i \phi _i(b_i) $. Then we have
$\frac{q_i \phi _i(b_i) }{x_i \phi _i(b_i) }\ge 1-\frac{1}{e}$.
Since this inequality holds for any request $r_i \in \mathcal{R'}$, and $E(w_{f_{IP1}}
) = \sum\nolimits_{r_i\in \mathcal{R'}} {q_i \phi _i(b_i) } $, $w_{O_{LP2}}
=\sum\nolimits_{r_i\in \mathcal{R'}} {x_i \phi _i(b_i)}$, we have
$E(w_{f_{IP1} } )\ge (1-1 \mathord{\left/ {\vphantom {1 e}} \right.
\kern-\nulldelimiterspace} e)w_{O_{LP2} }$.
 \end{proof}

We have shown that the expected weight of feasible solution $f_{IP1} $ of IP
(1) obtained by our randomized rounding is larger than $1-1/e$ times of the weight of
the optimal solution of LP (2). Obviously, the weight of the optimal
solution of LP (2), which is denoted by $w_{O_{LP2} } $, is larger than the optimal solution of IP
(1), which is denoted by $w_{O_{IP1} } $. Therefore, we can get that
\begin{theorem}
The expected weight of the rounded solution is at least $1-1/e$ times of the weight of the optimal solution to IP (1).
\end{theorem}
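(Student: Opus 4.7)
The plan is to obtain the claimed bound by chaining the previous theorem (on LP$(2)$) with the standard fact that a linear relaxation of a maximization integer program has an optimal value at least as large as that of the original integer program. This keeps the argument very short, since the randomized rounding analysis itself has already been carried out.

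First, I would observe that LP$(2)$ is obtained from IP$(1)$ by replacing the integrality constraints $x_{i,j}\in\{0,1\}$ and $x^{l}_{j,i}\in\{0,1\}$ with $0\le x_{i,j}\le 1$ and $0\le x^{l}_{j,i}\le 1$, while the objective and all other constraints remain identical. Hence every feasible solution of IP$(1)$ is also feasible for LP$(2)$ with the same objective value, so the optimal integral solution is a candidate for the LP. Since both problems maximize the same objective $\sum_{s_j}\sum_{r_i}\phi_i(b_i)x_{i,j}$, this yields
\begin{equation*}
w_{O_{LP2}} \;\ge\; w_{O_{IP1}}.
\end{equation*}

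Next, I would invoke the previous theorem, which states that the randomized rounding produces a solution $f_{IP1}$ feasible for IP$(1)$ whose expected weight satisfies $E(w_{f_{IP1}})\ge (1-1/e)\,w_{O_{LP2}}$. Chaining the two inequalities gives
\begin{equation*}
E(w_{f_{IP1}}) \;\ge\; \left(1-\tfrac{1}{e}\right) w_{O_{LP2}} \;\ge\; \left(1-\tfrac{1}{e}\right) w_{O_{IP1}},
\end{equation*}
which is exactly the claim.

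There is essentially no hard step here: the result is a corollary of the LP relaxation being valid and of the preceding rounding analysis. The only thing to be careful about is making the LP-relaxation direction explicit (namely that relaxing a maximization problem can only increase the optimum), and noting that the rounded solution $f_{IP1}$ is indeed feasible for IP$(1)$, which is guaranteed by steps $2$ and $3$ of the rounding procedure (conflict-free within each channel and at most one channel per request).
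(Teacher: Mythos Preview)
Your proposal is correct and matches the paper's own argument essentially line for line: the paper also observes that $w_{O_{LP2}} \ge w_{O_{IP1}}$ because LP(2) is a relaxation of IP(1), and then chains this with the previous theorem's bound $E(w_{f_{IP1}})\ge (1-1/e)w_{O_{LP2}}$. There is nothing to add.
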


\subsubsection{Deterministic Methods}

The rounding procedure only makes sure that the expected weight of $f_{IP1} $ is larger
than $1-1/e$ times of the weight of $O_{LP2} $. What we need
is to find a feasible solution of IP (1) whose weight is exactly larger
than $1-1/e$ times
of the $w_{O_{LP2} }$. In the following, we show that the rounding procedure can be derandomized
and how the method of conditional probabilities can be used in our setting.

\begin{algorithm}
\caption{DCA: Derandomized Channel Allocation Based on Linear Programming}\label{alg:2}
\begin{algorithmic}[1]

\REQUIRE ~Conflict graph $\mathcal{G}$, location matrix $C$, set of
channels $\mathcal{S}$, set $\mathcal{R'}$ sorted in increasing order according to $a _i$;

\ENSURE ~channel assignment $X^\ast$ ;

\STATE Solve LP(2) optimally;
\STATE $E(w_{f_{IP1} } )=\sum\limits_{s_j\in \mathcal{S}}{\phi_i(b_i)(1-\prod\nolimits_{s_j\in \mathcal{S}}{(1-x_{i,j})})}$;
\FOR {$i = 1$ to $n$}
\IF {$x_i>0$}
\FOR {$j = 1$ to $m$}
\IF {$E(w_{f_{IP1} } )\leq E(w_{f_{IP1}}|i,j)$}
\STATE set $x_{i,j}=1$, $x_i=1$;
\STATE set all $x_{i,k}=0$ and $x_{i,k}^l=0$ if $k\neq j$;
\STATE set all $x_{k,j}=0$ and $x_{k,j}^l=0$ if $k\neq i$ and $y_{i,k,j}=1$;
\STATE \text{Break}
\ENDIF
\ENDFOR
\ENDIF
\IF {$x_i\neq 1$}
\STATE $x_i=0$;
\ENDIF
\ENDFOR \RETURN $X^\ast$;
\end{algorithmic}
\end{algorithm}

Let $E(w_{f_{IP1} } \vert r_i \rightarrow s_j)$ be the expected weight when request $r_i $
is allocated in channel $s_j $, and let $E(w_{f_{IP1} } \vert \tilde {i})$ be the
expected weight when request $r_i$ will not be allocated in any channel.
Next, we will show how our derandomize algorithm works. We first sort all the
requests by their arrival time $a_i$ in the ascending order. Let $x_i =\sum\nolimits_{j\in \mathcal{S}} {x_{i,j}}$,
and then scan all the requests
one by one to decide which request can be allocated in channels. When request
$r_i $ is considered, we scan all of the channels that are available
 for $r_i $ to check
if $r_i $ can be allocated in one of them. If $E(w_{f_{IP1} } \vert
r_i \rightarrow s_j)<E(w_{f_{IP1} } )$, set $x_{i,j} =0$; otherwise, allocate $r_i $ in
channel $s_j $, and set $x_{i,j} =1$, $x_i =1$, $x_{i,k} =0$ if
$k\ne j$. Meanwhile, if $r_i $ is allocated in channel $s_j $, we set $x_{k,j}^l =0$ if $y_{i,k,j} =1$.

Suppose $r_i $ is the first request that satisfies $x_i >0$ in the ordered requests.
Let $q_{i,j} $ denote the probability that request $r_i $ is allocated in
channel $s_j $ and let $q_{\tilde{i}}$ denote the probability that $r_i$ is not allocated in any channel. By the formula for conditional probabilities, we have

\begin{equation}
\label{eq5}
E(w_{f_{IP1}})= \sum\limits_{r_j\in S}{E(w_{f_{IP1}}|r_i \rightarrow s_j)q_{i,j}}+E(w_{f_{IP1}}|\tilde{i})q_{\tilde{i}}%\prod\nolimits_{j\in S}{(1-x_{i,j}})
\tag{5}
\end{equation}

In particular, there exists at least one conditional expectation in
$E(w_{f_{IP1} } \vert r_i \rightarrow s_1),\cdots ,E(w_{f_{IP1} } \vert r_i \rightarrow s_m),E(w_{f_{IP1} }\\
\vert \tilde {i})$, which is larger than $E(w_{f_{IP1}})$. If it is
$E(w_{f_{IP1} } \vert r_i \rightarrow s_j)\ge E(w_{f_{IP1} } )$, we allocate request $r_i $ in
channel $s_j $; otherwise, $E(w_{f_{IP1} } \vert \tilde {i})\ge E(w_{f_{IP1}
} )$ holds, reject request $r_i $, and set $x_{i,j} =0$ for each
$s_j \in \mathcal{S}$. This can be done since $E(w_{f_{IP1} } )=\sum\nolimits_{r_i\in \mathcal{R'}}
{\phi _i(b_i) q_i } $, and $q_i $ can be computed precisely by

\begin{equation}
\label{eq6}
q_i=1-\prod _{s_j\in \mathcal{S}} (1-x_{i,j})
\tag{6}
\end{equation}

Let $q_{r_i \rightarrow s_j,k}$ stand for the probability that request $r_k$ is allocated in a channel when request $r_i$ is allocated in $s_j$. Then $q_{r_i \rightarrow s_j,k}$ can be calculated by

\begin{equation}
\label{eq7}
q_{r_i \rightarrow s_j,k} =\left\{ {\begin{array}{l}
1-\prod\nolimits_{o\ne j} {(1-x_{k,o} )} ,y_{i,k,j} =1 \\
q_k , otherwise \\
\end{array}} \right.
\tag{7}
\end{equation}

For each request $r_i $, we can compute $E(w_{f_{IP1} } \vert r_i \rightarrow s_j)$
precisely as the follows

\begin{equation}
\label{eq8}
E(w_{f_{IP1}}|r_i \rightarrow s_j)=\phi_i(b_i) +\sum\nolimits_{k\ne i} {\phi _k (b_k)q_{r_i \rightarrow s_j,k}}
\tag{8}
\end{equation}

Given the selections in the prior requests, we can continue
deterministically to allocate other requests and do the same thing while
maintaining the invariant that the conditional expectation $E(w_{f_{IP1} }
)$, never deceases. After allocating all of the requests, we can get a feasible
solution of IP (1) whose weight is as good as $E(w_{f_{IP1} } )$, \ie
at least $(1-1/e)w_{O_{LP2} }$.

Since LP(2) can be solved in polynomial time, and we can allocate requests in channels with time complexity $O(nm)$
 by using the optimal solution of LP(2), we get that
\begin{theorem}
DCA can be executed in polynomial time.
\end{theorem}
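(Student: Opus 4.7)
The plan is to decompose DCA into two distinct computational phases and bound each separately. First I would verify that solving LP(2) takes polynomial time, and second I would bound the cost of the derandomization loop in DCA.

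For the first phase, I would invoke the observation already made in the paper: since every arrival time and deadline subdivides $[0,T]$ into at most $2n+1$ time slices per channel, the number of variables $x_{i,j}$ and $x_{j,i}^l$ in LP(2) is $O(nm)+O(n^2 m)=O(n^2 m)$, and the number of constraints is of the same order. Standard interior-point or ellipsoid methods then yield an optimal fractional solution in time polynomial in $n$ and $m$.

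For the second phase, I would count operations in the main loop of Algorithm 2. The outer loop iterates over $n$ requests and the inner loop over $m$ channels, so there are $O(nm)$ iterations. In each iteration the algorithm must evaluate the conditional expectation $E(w_{f_{IP1}}|r_i\to s_j)$ and compare it with the running value $E(w_{f_{IP1}})$. Using equations (6), (7), and (8), I would argue that $q_{r_i\to s_j,k}$ requires $O(m)$ arithmetic operations, summing over all $k\ne i$ costs $O(nm)$, and therefore one conditional expectation is computed in $O(nm)$ time. The bookkeeping updates inside the \textbf{if} branch, which zero out conflicting or non-selected entries, cost at most $O(nm)$ per iteration as well. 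Multiplying through, the derandomization phase costs $O(n^2 m^2)$ arithmetic operations, which is polynomial in the input size.

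The only subtle point, and the step I would be most careful about, is to argue that after each deterministic assignment the quantities $q_i$ stored in the algorithm can be maintained (or recomputed) efficiently so that the conditional expectations in subsequent iterations remain consistent with the invariant $E(w_{f_{IP1}})\ge (1-1/e)w_{O_{LP2}}$; this is straightforward because fixing an $x_{i,j}\in\{0,1\}$ only affects $q_k$ for $k$ conflicting with $i$ in channel $s_j$, and each such update is cheap. Combining the polynomial cost of LP(2) with the $O(n^2 m^2)$ cost of the derandomization yields the theorem.
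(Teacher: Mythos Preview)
Your proposal is correct and follows the same two-phase decomposition as the paper: first argue LP(2) is polynomial-size and hence solvable in polynomial time, then bound the derandomization loop. The paper's own proof is extremely terse---it simply asserts LP(2) is polynomial and states the allocation loop runs in $O(nm)$---whereas you give a more careful accounting of the per-iteration cost of evaluating $E(w_{f_{IP1}}\mid r_i\to s_j)$ via equations (6)--(8), arriving at $O(n^2m^2)$ overall; this refinement is sound but does not change the approach or the conclusion.
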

%\begin{comment}
\begin{proof}
As mentioned above, LP(2) can be solved in polynomial time. Then, we allocate requests in channels with time complexity $O(nm)$ in DCA
with the optimal solution of LP(2). This finishes the proof. \end{proof}
%\end{comment}

Recall that to ensure the truthfulness of our auction mechanism, the
allocation algorithm must be bid-monotone. This means that if request $r_i $ wins
the auction with bid $v_i$, it always wins with bid $b_i >v_i $. In
\emph{Algorithm \ref{alg:2}}, request $r_i $ wins in the auction only if there exists a channel
$s_j $ which satisfies $E(w_{f_{IP1} } \vert r_i \rightarrow s_j)\ge E(w_{f_{IP1} } )$.
However, it is hard to judge that if $E(w_{f_{IP1} } \vert r_i \rightarrow s_j)$ is still
larger than $E(w_{f_{IP1} } )$ when request $r_i $ increases its bid.
We cannot prove or disprove the bid-monotone property of the allocation method DCA. Thus, it is unknown whether we can design a truthful mechanism based on this method. In the rest of the section, we revise this method and show that the revised method does satisfy the bid-monotone property.

Since that there exists at least one of the conditional expectations between
$max_{s_j \in \mathcal{S}} E(w_{f_{IP1} } \vert r_i \rightarrow s_j)$ and $E(w_{f_{IP1} } \\
\vert \tilde {i})$, which is larger than $E(w_{f_{IP1} })$. Thus, if we allocate $r_i $ in the channel with the maximal
conditional expectation as long as $max_{s_j \in \mathcal{S}} E(w_{f_{IP1} } \vert r_i \rightarrow s_j)\ge E(w_{f_{IP1} } \vert \tilde {i})$, and do not
allocate $r_i $ in any channel otherwise, we can also get a feasible
solution of IP (1), whose weight is as good as $E(w_{f_{IP1} } )$.

This can be done since we can compute $E(w_{f_{IP1} } \vert i,j)$ and
$E(w_{f_{IP1} } \vert \tilde {i})$ precisely as follows:

\begin{equation}
\label{eq9}
E(w_{f_{IP1} } \vert r_i \rightarrow s_j)=\phi _i(b_i) +E_{k\ne i} (w_{{f}'_{IP1} } \vert r_i \rightarrow s_j)
\tag{9}
\end{equation}

where $E_{k\ne i} (w_{{f}'_{IP1} } \vert r_i \rightarrow s_j)$ is the expected weight of all
other requests when request $r_i $ has been allocated in channel $s_j $. We can
get it by allocating $r_i $ in channel $s_j $ first, and then solve LP
(2) optimally with other requests.

\begin{equation}
\label{eq10}
E(w_{f_{IP1} } \vert \tilde {i})=E_{\mathcal{R'}/r_i} (w_{{f}'_{IP1} } )
\tag{10}
\end{equation}

where $E_{\mathcal{R}/r_i} (w_{{f}'_{IP1} } )$ is the expected weight of all other
requests when request $r_i$ does not be allocated in any channel. We can get
it by solving LP (2) optimally with requests except $r_i $.

Based on the observation above, we give an revised version of
 Algorithm DCA as follows.

\begin{algorithm}
\caption{MDCA: Monotone Derandomized Channel Allocation Based on Linear Programming}\label{alg:4}
\begin{algorithmic}[1]

\REQUIRE ~Conflict graph $\mathcal{G}$, location matrix $C$, set of channels $\mathcal{S}$, set of $\mathcal{R'}$ sorted in increasing order according to $a _i$;

\ENSURE ~channel assignment $X^\ast$ ;

\STATE Solve LP(2) optimally;
\FOR {$i = 1$ to $n$}
\FOR {$j = 1$ to $m$}
\IF {$x_{i,j}>0$}
\STATE $E(w_{f_{IP1} } \vert r_i \rightarrow s_k)=max_{s_j \in \mathcal{S}} E(w_{f_{IP1} } \vert r_i \rightarrow s_j)$
\IF {$E(w_{f_{IP1} } \vert r_i \rightarrow s_k)\ge E(w_{f_{IP1} } \vert \tilde {i})$}
\STATE set $x_{i,j}=1$, $x_i=1$;
\STATE set all $x_{i,k}=0$ and $x_{i,k}^l=0$ if $k\neq j$;
\STATE set all $x_{k,j}=0$ and $x_{k,j}^l=0$ if $k\neq i$ and $y_{i,k,j}=1$;
\STATE \text{Break}
\ENDIF
\ENDIF
\ENDFOR
\IF {$x_i\neq 1$}
\STATE $x_i=0$;
\ENDIF
\ENDFOR \RETURN $X^\ast$;
\end{algorithmic}
\end{algorithm}

In MDCA, we first sort all of the requests by their arrival times in the ascending order, and then we scan all requests one by one to decide which
request can be allocated in channels. When request $r_i $ is considered, we compute $E(w_{f_{IP1} } \vert r_i \rightarrow s_j)$
for all channels $s_j \in \mathcal{S}$ that no request conflicting with it has been allocated in. We allocate $r_i $ in channel $s_k $ when $E(w_{f_{IP1} }
\vert r_i \rightarrow s_k)=max_{s_j \in \mathcal{S}}  E(w_{f_{IP1} } \vert r_i \\
\rightarrow s_j)\ge E(w_{f_{IP1} } \vert \tilde {i})$, and reject it otherwise. After the last request was considered in MDCA, we get a feasible solution of IP (1), whose weight is as good as $E(w_{f_{IP1} } )$.

\begin{theorem}
MDCA (see Algorithm 3) is bid monotone.
\end{theorem}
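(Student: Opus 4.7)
The strategy is to fix all bids except that of $r_i$, assume $r_i$ wins with bid $v_i$, and let $r_i$ raise its bid to any $b_i > v_i$. By regularity of $F_i$ we have $\phi_i(b_i) \geq \phi_i(v_i)$, which is trivial in the social-efficiency case where $\phi_i(b_i)=b_i$. I then trace Algorithm~3 iteration by iteration in arrival-time order, inducting on the prefix of already-processed requests to show that the state inherited by $r_i$'s iteration is unchanged, or at worst differs in a way that frees resources for $r_i$, and verifying that at $r_i$'s own iteration the winning inequality is preserved.

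The algebraic core lies in the decomposition given by formulas~(9) and~(10). Assuming the pre-$r_i$ state is the same in both worlds, $E(w_{f_{IP1}} \vert r_i \rightarrow s_j) = \phi_i(b_i) + E_{k\neq i}(w'_{f_{IP1}} \vert r_i \rightarrow s_j)$, where the second summand solves LP(2) over requests other than $r_i$ and therefore has an objective that does not mention $b_i$; likewise $E(w_{f_{IP1}} \vert \tilde{i}) = E_{\mathcal{R}'/r_i}(w'_{f_{IP1}})$ does not mention $b_i$. Raising $b_i$ therefore shifts every $E(w_{f_{IP1}} \vert r_i \rightarrow s_j)$ upward by the same constant $\phi_i(b_i)-\phi_i(v_i) \geq 0$ while leaving $E(w_{f_{IP1}} \vert \tilde{i})$ fixed, so the winning condition is preserved and, under a fixed tie-breaking rule, the argmax channel is unchanged.

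For the induction step at an earlier request $r_k$ with $a_k < a_i$, both $E(w_{f_{IP1}} \vert r_k \rightarrow s')$ and $E(w_{f_{IP1}} \vert \tilde{k})$ involve LPs that contain $r_i$ and hence vary with $\phi_i$. By LP sensitivity each is a convex piecewise-linear function of $\phi_i$ whose slope equals the LP-optimal value of $\sum_j x_{i,j}$ in the respective LP. The LP underlying $E(w_{f_{IP1}} \vert \tilde{k})$ drops $r_k$ outright, whereas the LP underlying $E(w_{f_{IP1}} \vert r_k \rightarrow s')$ fixes $r_k$ in $s'$ and thereby imposes additional conflict constraints on the time-slice variables of requests sharing $s'$ with $r_k$, among which $r_i$ may figure. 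The feasible region of the latter LP (projected onto the variables of requests other than $r_k$) is therefore contained in that of the former, so the slope of $E(w_{f_{IP1}} \vert \tilde{k})$ in $\phi_i$ weakly dominates that of $E(w_{f_{IP1}} \vert r_k \rightarrow s')$. As $\phi_i$ rises the gap $E(w_{f_{IP1}} \vert \tilde{k}) - \max_{s'} E(w_{f_{IP1}} \vert r_k \rightarrow s')$ can only enlarge, so a losing $r_k$ stays losing and a winning $r_k$ can only flip to losing, in which case the freed channel and time-slice capacity cannot hurt $r_i$.

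The main obstacle will be making the slope-dominance claim fully rigorous at breakpoints of the piecewise-linear LP value, where slopes become subgradient intervals and the dominance must hold for consistently chosen selections. I would handle this through a standard perturbation argument, or by directly comparing optimal dual certificates of the two LPs, supplemented by a fixed lexicographic tie-breaking rule in both the LP solver and the argmax over channels to make the bid-to-allocation mapping single-valued. Combining this with the algebraic step at $r_i$ then yields that whenever $r_i$ wins at $v_i$ it also wins at any $b_i > v_i$, which is the bid-monotone property.
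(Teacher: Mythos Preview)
Your overall strategy---trace Algorithm~3 in arrival order and argue that raising $\phi_i$ cannot worsen the state for $r_i$---matches the paper's, and your treatment of $r_i$'s own iteration via the decomposition (9)--(10) is exactly the paper's Case~1.

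The gap is in your induction step for an earlier request $r_k$. You argue that because the feasible region of the LP defining $E(w_{f_{IP1}}\mid r_k\to s')$ (projected onto the non-$r_k$ variables) is contained in that of $E(w_{f_{IP1}}\mid\tilde{k})$, the derivative of the former in $\phi_i$ is weakly smaller. Containment does force a smaller LP \emph{value}, but it does \emph{not} bound the optimal value of a single coordinate $\sum_j x_{i,j}$, which by the envelope theorem is the slope you need. A three-request, one-channel instance breaks it: let $r_k$ conflict with $r_2$, $r_2$ conflict with $r_i$, and $r_k,r_i$ be conflict-free, with $\phi_k=10$, $\phi_2=100$. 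For small $\phi_i$, $E(w\mid\tilde{k})$ selects $r_2$ (slope $0$ in $\phi_i$) while $E(w\mid r_k\to s_1)$ forces $r_2$ out and selects $r_i$ (slope $1$). Here $r_k$ flips from losing to winning as $\phi_i$ grows, so your conclusion ``a losing $r_k$ stays losing'' is false as stated.

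The paper does not attempt a uniform slope comparison. It argues by contradiction and splits on whether the earlier request \emph{conflicts with $r_i$} on the channel at issue. For an $r_l$ conflicting with $r_i$ on $s_j$, fixing $r_l$ to $s_j$ explicitly zeroes $x_{i,j}$, and the paper claims the slope of $E(w\mid r_l\to s_j)$ in $\phi_i$ is at most that of $E(w\mid\tilde{l})$, so such an $r_l$ cannot newly win $s_j$ and block $r_i$. For an $r_k$ \emph{not} conflicting with $r_i$ (precisely your counterexample), the paper asserts the \emph{reverse} inequality and concludes that such an $r_k$, if winning at $v_i$, stays winning at $b_i$---and its presence can only help $r_i$ by displacing a conflicting intermediary. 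Replacing your uniform containment claim with this conflict-based case split is what is needed to close the gap.
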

\begin{proof}
Suppose request $r_i $ wins the auction with the bid $b_i $, and it is allocated with the channel $s_j $, but it cannot be allocated in any channel with the bid $b_i>v_i$.
There are two possible cases.

Case 1: $max_{s_j \in S} E(w_{f_{IP1} } \vert r_i \rightarrow s_j)< E(w_{f_{IP1} } \vert \tilde {i})$ when $r_i$ bids some value $b_i$ with $b_i > v_i$. However, when $r_i $ increases its bid, clearly, $E_{k\ne i} (w_{{f}'_{IP1} } \vert r_k \rightarrow s_j)$ and $E(w_{f_{IP1} } \vert \tilde {i})$ keep
invariant, and $E(w_{f_{IP1} } \vert r_i \rightarrow s_j)\ge E(w_{f_{IP1} } \vert \tilde {i})$ always holds in this case. Thus, our hypothesis does not hold in this case.

Case 2: When $r_i$ is considered with the bid $b_i > v_i$, the channel $s_j$ has been occupied by request $r_l$, which conflicts with $r_i$. Obviously, $r_l$ is not allocated in $s_j$ when $r_i$ bids $v_i$. That means
$E(w_{f_{IP1} } \vert r_l \rightarrow s_j)<E(w_{f_{IP1} } \vert \tilde {l})$ or the channel $s_j$ has been occupied by other requests
which conflict with $r_l$ but conflict-free with $r_i$ when $r_l$ was considered. In the first subcase, the contribution of $r_i$ in $E(w_{f_{IP1} } \vert \tilde {l})$ is larger than the contribution in $E(w_{f_{IP1} } \vert r_l \rightarrow s_j)$. Then, the increment of $E(w_{f_{IP1} } \vert \tilde {l})$ is lager than that of $E(w_{f_{IP1} } \vert r_l \rightarrow s_j)$ when $r_i$ increases its bid from $v_i$ to $b_i$. Thus, $r_l$ cannot be allocated in $s_j$ when
 $r_i$ bids $b_i>v_i$.
 Assume that $r_k$ which conflicts with $r_l$ is allocated in $s_j$ when $r_i$ bids $v_i$ in the second subcase. However,
 the contribution of $r_i$ in $E(w_{f_{IP1} } \vert r_k \rightarrow s_j)$ is no less than the contribution in $E(w_{f_{IP1} } \vert \tilde {k})$. Thus,
 the increment of $E(w_{f_{IP1} } \vert r_k \rightarrow s_j)$ is lager than that of $E(w_{f_{IP1} } \vert \tilde {k})$ when $r_i$ increases its
 bid from $v_i$ to $b_i$. $r_k$ will also be allocated in $s_j$ when $r_i$ bids $b_i>v_i$. In conclusion, $r_l$ cannot be allocated in $s_j$ when $r_i$ bids $b_i>v_i$.

Based on the analysis above, if $r_i $ wins the auction with a bid $v_i $, it always wins with the bid $b_i>v_i$.
\end{proof}

%In MDCA, we need to solve LP(2) to optimal no more than $n*m$ times. Then we can also get that

\begin{theorem}
MDCA can be executed in polynomial time.
\end{theorem}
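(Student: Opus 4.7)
The plan is to bound the number of linear-program invocations used by MDCA and to argue that each of them, together with the remaining book-keeping, takes polynomial time in the input size. First I would recall from the discussion preceding Theorem~4 that LP(2) has a polynomial number of variables and constraints: since the time axis of each channel is partitioned into at most $2n+1$ slices, the total count of the $x_{i,j}$ and $x_{j,i}^l$ variables is $O(n^2m)$, and the same bound governs the number of constraints. Hence LP(2) can be solved optimally in polynomial time by any standard LP algorithm.

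Next I would analyse the nested loops of \emph{Algorithm~\ref{alg:4}}. The outer loop iterates over the $n$ requests, and for each request $r_i$ the inner loop scans at most $m$ channels. Within one such iteration MDCA must obtain $E(w_{f_{IP1}}\vert r_i \rightarrow s_j)$ for every candidate $s_j$ as well as $E(w_{f_{IP1}}\vert \tilde{i})$. By equations~(9) and~(10), each of these conditional expectations reduces to solving an instance of LP(2) with either one variable preset to $1$ (tentatively allocating $r_i$ to $s_j$) or with $r_i$ removed from the request set. Every such auxiliary LP is no larger than LP(2) itself and thus is solvable in polynomial time.

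Finally I would aggregate the costs. The total number of conditional expectations evaluated across the whole execution is at most $O(nm)$, contributing $O(nm)\cdot T_{LP}(n,m)$ time in LP solves, where $T_{LP}(n,m)$ denotes the polynomial cost of one LP(2) solve. The fixing of variables, zeroing of the conflicting $x_{k,j}$ and $x_{k,j}^l$ entries, and maintenance of the winner vector add at most $O(n^2m)$ further elementary operations. The resulting total is polynomial in $n$ and $m$, which yields the theorem.

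The only mildly subtle point, and the nearest thing to an obstacle, is justifying that after each partial allocation the remaining conditional expectations can still be computed by solving an LP with a single variable pinned or a single request removed, rather than re-deriving them from scratch; this is already supplied by the derivation of~(9)--(10), so the proof reduces to careful book-keeping and does not require new ideas beyond the polynomial-time solvability of LP(2).
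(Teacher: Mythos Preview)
Your proposal is correct and follows essentially the same approach as the paper: both argue that LP(2) is polynomial-time solvable and that MDCA invokes it $O(nm)$ times, yielding a total running time of $O(nm)\cdot T_{LP}(n,m)$. Your version is simply more explicit about the variable/constraint counts and the book-keeping overhead, but the core argument is identical to the paper's one-paragraph proof.
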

%\begin{comment}
\begin{proof}
We have shown that LP(2) can be solved in polynomial time. For each request with $x_i \ge 0$ in the optimal solution of LP(2), we solve LP(2) no more than $m$ times to check if request $r_i$ can be allocated in a channel. Then, the time complexity of MDCA is $O(nm)$ multiplied by the time complexity of solving LP(2). This finishes the proof.
\end{proof}
%\end{comment}

Since the revised Algorithm MDCA can be executed in polynomial time, we can find the critical value for each winner using a method such
as binary search. However, the time complexity of binary search is depends on the ratio of the max bid among requests to the bid size, which may exponential times of $n$. It is hard to find the critical values for winners in polynomial times. Thus, we further design another channel allocation mechanism that is truthful in expectation.

\subsubsection{Truthful in expectation}

In this section, we will employ a technique proposed by Lavi and Swamy \cite{nearoptimallinear} to design a channel allocation mechanism (CATE), which achieves the truthfulness in expectation.

The basic idea is depicted as follows. With the optimal solution of LP(2), ${\rm {\bf X}} = (x_i )_n $, we can get a set of feasible
solutions of IP(1), $\mathcal{L}$, by allocating some requests that $x_i\ge 0$ in channels. For each feasible solution
$l \in \mathcal{L}$, we define a probability $q(l)$ which will be discussed later, and choose $l$ as the final solution
 with probability $q(l)$. Let $x^{l}_{i}=1$ denote that request $r_i$ wins in solution $l$, and let $x^{l}_{i}=0$
 denote that $r_i$ loses. Then, if the equation $\sum\nolimits_{l\in \mathcal{L}}{x_{i,l}q(l)}=\frac{x_i}{\alpha}$ is
 established for any request $r_i$, we can get that the probability of request $r_i$ being assigned a channel is exactly
  $\frac{x_i}{\alpha}$. For each winner, the charge can be calculated as follows

\begin{equation}
\label{eq11}
p_i= \frac{1}{x_i}(\sum\nolimits_{j\ne i}{\phi_j x'_j}-\sum\nolimits_{j\ne i}{\phi_j x_j})
\tag{11}
\end{equation}

The vector $X'=(x'_j)_n$ is obtained by computing LP(2) with $b_i=0$. We show that this
allocation mechanism and payment scheme result in an auction, which is truthful in expectation.

\begin{theorem}
CATE is truthful in expectation.
\end{theorem}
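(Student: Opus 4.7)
My plan is to prove truthfulness in expectation by reducing it to a VCG-style argument on the LP relaxation, following the Lavi--Swamy convex-decomposition technique that the excerpt cites.

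First I would compute bidder $i$'s expected utility under bid $b_i$ (with others' bids fixed). By the construction of CATE, the probability that $r_i$ is allocated some channel in the randomly chosen integer solution $\ell \in \mathcal{L}$ is exactly $\sum_{\ell} x^{\ell}_i\, q(\ell) = x_i/\alpha$, where $x_i$ is the LP(2) value. Conditioned on winning, bidder $i$ pays $p_i = \frac{1}{x_i}\bigl(\sum_{j\ne i} \phi_j x'_j - \sum_{j \ne i} \phi_j x_j\bigr)$, where $X'$ is the LP(2) solution computed with $b_i=0$ and $X$ is the LP(2) solution computed with $b_i$. Hence the expected utility is
\begin{equation*}
\mathbb{E}[u_i] \;=\; \tfrac{x_i}{\alpha}\bigl(v_i - p_i\bigr) \;=\; \tfrac{1}{\alpha}\Bigl(v_i x_i + \sum_{j\ne i} \phi_j x_j\Bigr) \;-\; \tfrac{1}{\alpha}\sum_{j\ne i} \phi_j x'_j.
\end{equation*}

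Next I would observe that the second term $\sum_{j\ne i}\phi_j x'_j$ is independent of $b_i$ (it is computed from LP(2) with $b_i$ replaced by $0$), so the bidder's optimization reduces to maximizing $v_i x_i + \sum_{j\ne i}\phi_j x_j$ over the choice of $b_i$. If the bidder reports truthfully, so $\phi_i(b_i) = v_i$ in the social-efficiency case (or the appropriate virtual valuation in the revenue case, noting that Myerson's transformation preserves incentives for regular distributions), then the objective coincides with the LP(2) objective $\sum_j \phi_j x_j$ that the allocation is optimizing. Therefore truthful bidding makes the LP pick the $X$ that maximizes exactly the quantity the bidder cares about, and no other report can yield a strictly larger value of the bidder's expected utility. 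This is precisely the VCG-style pivot argument, transported from integer outcomes to the LP optimum, with the factor $1/\alpha$ cancelling out of the comparison.

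The main obstacle, and the step I would treat most carefully, is justifying the decomposition $\sum_{\ell} x^{\ell}_i q(\ell) = x_i/\alpha$ for a valid probability distribution $q(\cdot)$ over integer feasible solutions of IP(1). I would invoke the Lavi--Swamy decomposition lemma: since our randomized rounding plus derandomization (Theorem on $(1-1/e)$-approximation) guarantees an integer solution of weight at least $(1-1/e)$ times the LP optimum for any nonnegative objective, the scaled fractional vector $X/\alpha$ with $\alpha = e/(e-1)$ lies in the convex hull of integer feasible solutions to IP(1). This is the integrality-gap verifier condition, and it yields the required convex combination $q(\cdot)$ which can in principle be computed in polynomial time via the ellipsoid method using the rounding scheme as a separation oracle. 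The remaining routine steps are to verify that $p_i \geq 0$ and that the mechanism is individually rational in expectation: non-negativity follows because $X'$ is LP(2)-optimal when $b_i=0$ so it achieves at least as much on the remaining bidders as $X$ does, giving $\sum_{j\ne i}\phi_j x'_j \geq \sum_{j\ne i}\phi_j x_j$; individual rationality in expectation follows from the same inequality applied to the utility expression above, since $v_i x_i \geq 0$.
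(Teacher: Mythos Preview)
Your proposal is correct and follows essentially the same approach as the paper: compute the expected utility as $\tfrac{1}{\alpha}\bigl[v_i x_i + \sum_{j\ne i}\phi_j x_j - \sum_{j\ne i}\phi_j x'_j\bigr]$, observe that the last term is independent of $b_i$, and conclude via the VCG-on-LP argument that truthful reporting maximizes the remaining part. Your write-up is in fact more thorough than the paper's terse version (you explicitly spell out the VCG step, justify the Lavi--Swamy decomposition, and verify nonnegative payments and individual rationality), but the core argument is identical.
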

\begin{proof}
Let $u_i(b_i)$ be the utility of request $r_i$ when bidding with $b_i$. Then, the expected utility of $r_i$ is

\setcounter{equation}{11}
\begin{equation}
\begin{array}{r}
\displaystyle E[u_i(b_i)]= \frac{x_i}{\alpha} [v_i - \frac{1}{x_i} (\sum\nolimits_{j \ne i}{\phi_j x'_j} - \sum\nolimits_{j \ne i}{\phi_j x_j})] \\
\displaystyle = \frac{1}{\alpha}[v_i x_i+ \sum\nolimits_{j \ne i}{\phi_j x_j}- \sum\nolimits_{j \ne i}{\phi_j x'_j}]
\end{array}
\end{equation}

Since $\sum\nolimits_{j\ne i}{\phi(j)x'_j}$ keeps unchanged when we increase or decrease
the bid of $r_i$, $E[u_i(b_i)]$ is maximized when $b_i=v_i$. That means the expected
utility of $r_i$ is maximized when $r_i$ bids truthfully. \end{proof}

The distribution of $P(l)$ can be solved by the following LP.

\begin{equation*}
\setlength{\abovedisplayskip}{1pt}
\setlength{\belowdisplayskip}{1pt}
\min \sum\nolimits_{l\in \mathcal{L}}{q(l)},
%\mbox{ subject to},
\tag{LP(3)}
\end{equation*}
%\end{small}
%\begin{small}
subject to
\begin{equation*}
\begin{cases}
\sum\limits_{l\in \mathcal{L}}{x^{l}_{i}q(l)}=\frac{x_i}{\alpha},  \forall r_i\in \mathcal{R'} \\
     \sum\limits_{l\in \mathcal{L}}{q(l)}\ge 1  \\
     q(l)\ge 0  , \forall l\in \mathcal{L} \\
\end{cases}
\end{equation*}

The dual of LP(3) is:
\begin{equation*}
\setlength{\abovedisplayskip}{1pt}
\setlength{\belowdisplayskip}{1pt}
\max  z+\sum\nolimits_{r_i\in \mathcal{R'}}{\frac{x_i}{\alpha} w_i},
%\mbox{ subject to},
\tag{LP(4)}
\end{equation*}
subject to
\begin{equation*}
\begin{cases}
z+\sum\limits_{r_i\in \mathcal{R'}}{x^{l}_{i} w_i} \leq 1 ,  \forall l\in \mathcal{L} \\
     z \ge 0    \\
\end{cases}
\end{equation*}

Since LP(3) has an exponential number of variables, we discuss its dual (LP(4)). LP(4) has an exponential number of constraints, and we can view $w$ in LP(4) as a valuation. Suppose a $\alpha$-approximation algorithm $App$ proves an integrality gap of $\alpha$ with the optimal solution of LP(2). It has been shown in \cite{nearoptimallinear} that a separation oracle for LP(4) can be obtained by using Algorithm $App$ with valuation $w$, so the ellipsoid method can be used to solve LP(4) and hence LP(3). In CATE, we choose the allocation method DCA which is designed in the last section as $App$. Then, we can get that $\alpha = \frac {e}{e-1}$. Since the probability of any request $r_i$ being assigned a channel is exactly $\frac{x_i}{\alpha}$, we can conclude that the expected weight of the solution of CATE is larger than $1-1/e$ times of the weight of the optimal solution of IP(1).

\begin{theorem}
CATE can be executed in polynomial time.
\end{theorem}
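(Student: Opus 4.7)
The plan is to decompose CATE into four computational stages and bound each by a polynomial, invoking the Lavi--Swamy framework to handle the only nontrivial one.

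First, CATE begins by solving LP(2) to obtain the fractional allocation $X=(x_i)_n$; this is polynomial in $n$ and $m$ by the argument given for Theorem on DCA, since LP(2) has $O(nm)$ variables and $O(n^{2}m)$ constraints. Second, CATE needs the distribution $q(\cdot)$ over integer solutions in $\mathcal{L}$ satisfying the marginal condition $\sum_{l}x^{l}_{i}q(l)=x_i/\alpha$. Although LP(3) has an exponential number of variables (one per feasible allocation), its dual LP(4) has $n{+}1$ variables and exponentially many constraints. The standard approach is to solve LP(4) via the ellipsoid method, which is polynomial provided there is a polynomial-time separation oracle.

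Third, and this is the main obstacle, the separation oracle for LP(4) must, given $(z,w)$, either certify $z+\sum_i x^{l}_{i}w_i\le 1$ for all $l\in\mathcal{L}$, or return a violated $l$. I would follow Lavi and Swamy and reduce this to running DCA with the virtual-bid vector $w$ in place of $\phi$: DCA returns an integer allocation $l$ whose weight is at least $(1-1/e)\cdot w_{O_{LP(2)}(w)} = w_{O_{LP(2)}(w)}/\alpha$, which suffices to detect any violated inequality scaled by $\alpha$, so a feasibility check for the appropriately scaled LP(4) succeeds. Since DCA runs in polynomial time by the earlier theorem, the separation oracle does too, and the ellipsoid method therefore terminates in polynomially many iterations, producing polynomially many ``cut'' allocations. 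Restricting LP(3) to those polynomially many columns and solving the resulting small LP yields $q(\cdot)$ with polynomial support; sampling an allocation is then trivial.

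Fourth, for payment computation, each winner $r_i$'s price in~(11) requires the vector $X'=(x'_j)_n$ obtained by re-solving LP(2) with $b_i$ set to zero. This is at most $n$ additional polynomial-time LP solves, so the total payment computation is polynomial. Adding the four stages gives overall polynomial runtime for CATE, and the delicate step is the ellipsoid/separation-oracle reduction above rather than any direct combinatorial argument.
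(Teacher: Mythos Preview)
Your proposal is correct and follows essentially the same approach as the paper's proof: solve LP(4) via the ellipsoid method using DCA as a polynomial-time separation oracle (per the Lavi--Swamy framework), extract a polynomial-size support $\mathcal{L}$, and then solve LP(3) restricted to that support. Your version is in fact more detailed than the paper's---you explicitly account for the initial LP(2) solve and the $n$ additional LP(2) solves needed for the payments in~(11), which the paper's proof omits; your variable count for LP(2) is slightly off (the $x_{j,i}^l$ variables bring the total to $O(n^{2}m)$ rather than $O(nm)$), but this does not affect the polynomial-time conclusion.
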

%\begin{comment}
\begin{proof}
We can use ellipsoid method and mechanism DCA on LP(4) with the optimal solution of LP(2) to compute a set of feasible solutions of IP(1). Since the ellipsoid method takes at most polynomial number of steps, and the mechanism DCA can be executed in polynomial time, thus, they can be used to return a set of solutions $\mathcal{L}$ in a polynomial size. Obviously, LP(3) can also be solved in polynomial time with $\mathcal{L}$ in a polynomial size. This finishes the proof.
\end{proof}
%\end{comment}

\section{Simulation Results}\label{sec:simulation}
The main purpose of our extensive simulations is to examine the performance of the proposed auctions. We first start by describing our simulation setup. Then, we study the setting variance impact on the performance of the proposed auction mechanisms.

\subsection{Simulation Setup}
In our simulation, we assume there is only one primary user who subleases the usage of $3$ channels in the spectrum market,
and the auction period $T$ is one hour. We use the disk model to simulate the license area of each candidate spectrum,
and the radius of license area is randomly generated from $40$ to $70$. All the buyers are randomly distributed within
a fixed area of $100 \times 100$ square units. Without loss of generality, we further assume that all the buyers' bid values
are uniformly, exponentially or Gaussian distributed in $[0,1]$, and the time duration $t_i$ for each request $r_i$ is randomly
generated from $10$ to $30$ minutes.

\begin{figure*}
  \centering
  \begin{minipage}[t]{0.33\textwidth}\label{fig:SocialUniformdistribution}
    \centering
    \includegraphics[width=6cm,height=4cm]{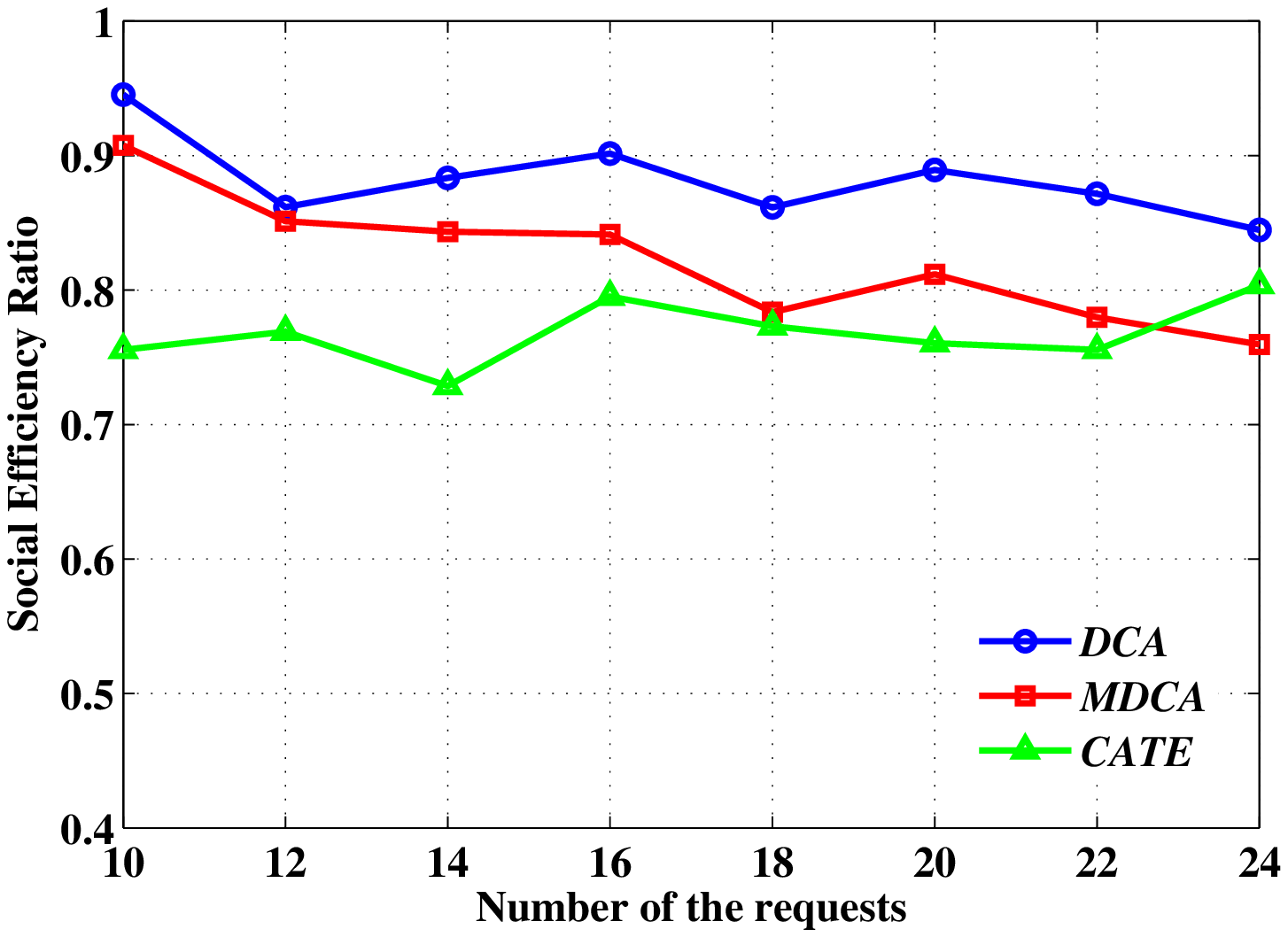}
    \small{\caption{Social Efficiency under Uniform Distribution}}
  \end{minipage}%
  \begin{minipage}[t]{0.33\textwidth}\label{fig:SocialExponentdistribution}
    \centering
    \includegraphics[width=6cm,height=4cm]{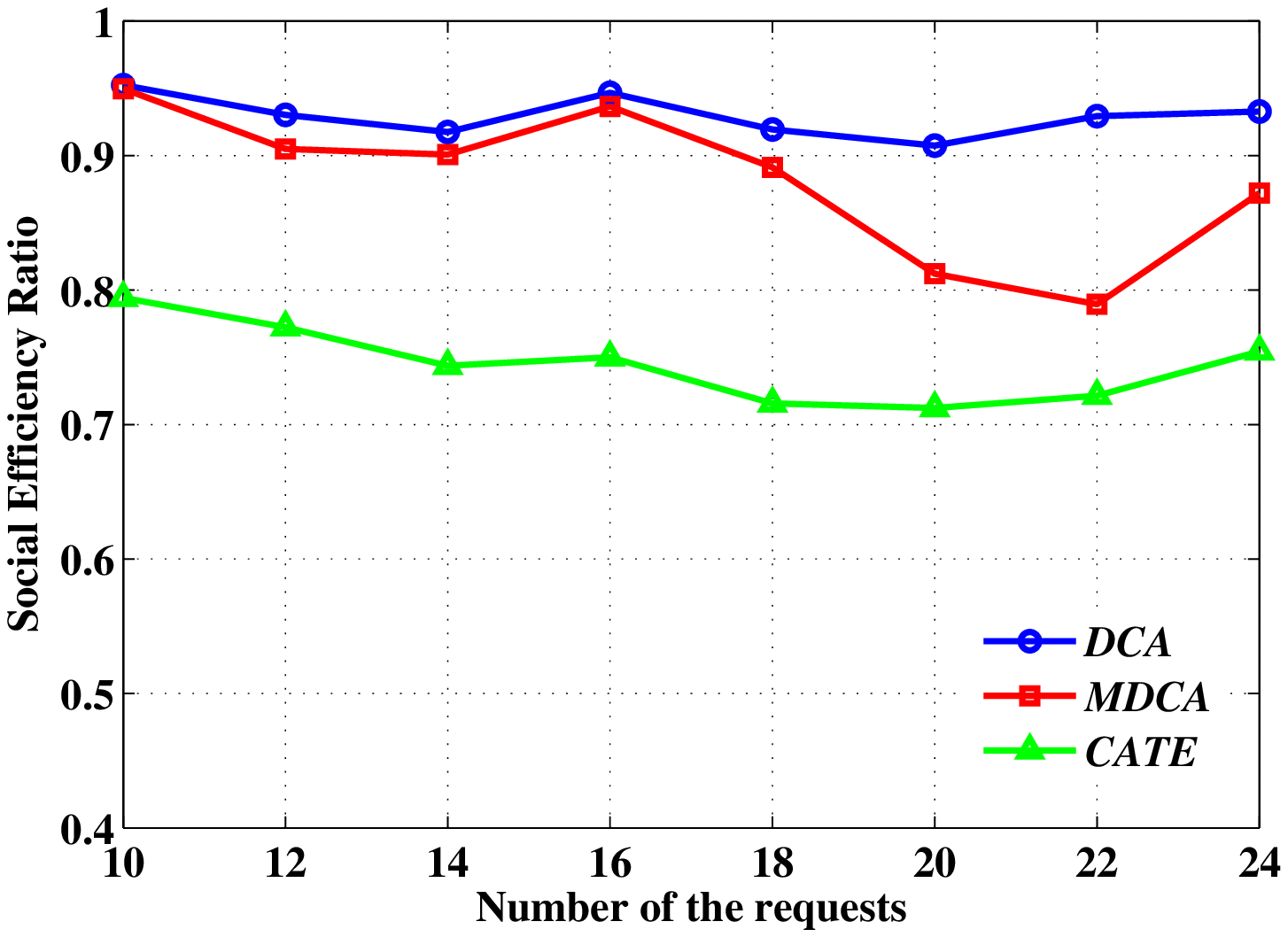}
    \small{\caption{Social Efficiency under Exponential Distribution}}
  \end{minipage}%
  \begin{minipage}[t]{0.33\textwidth}\label{fig:SocialGaussiandistribution}
    \centering
    \includegraphics[width=6cm,height=4cm]{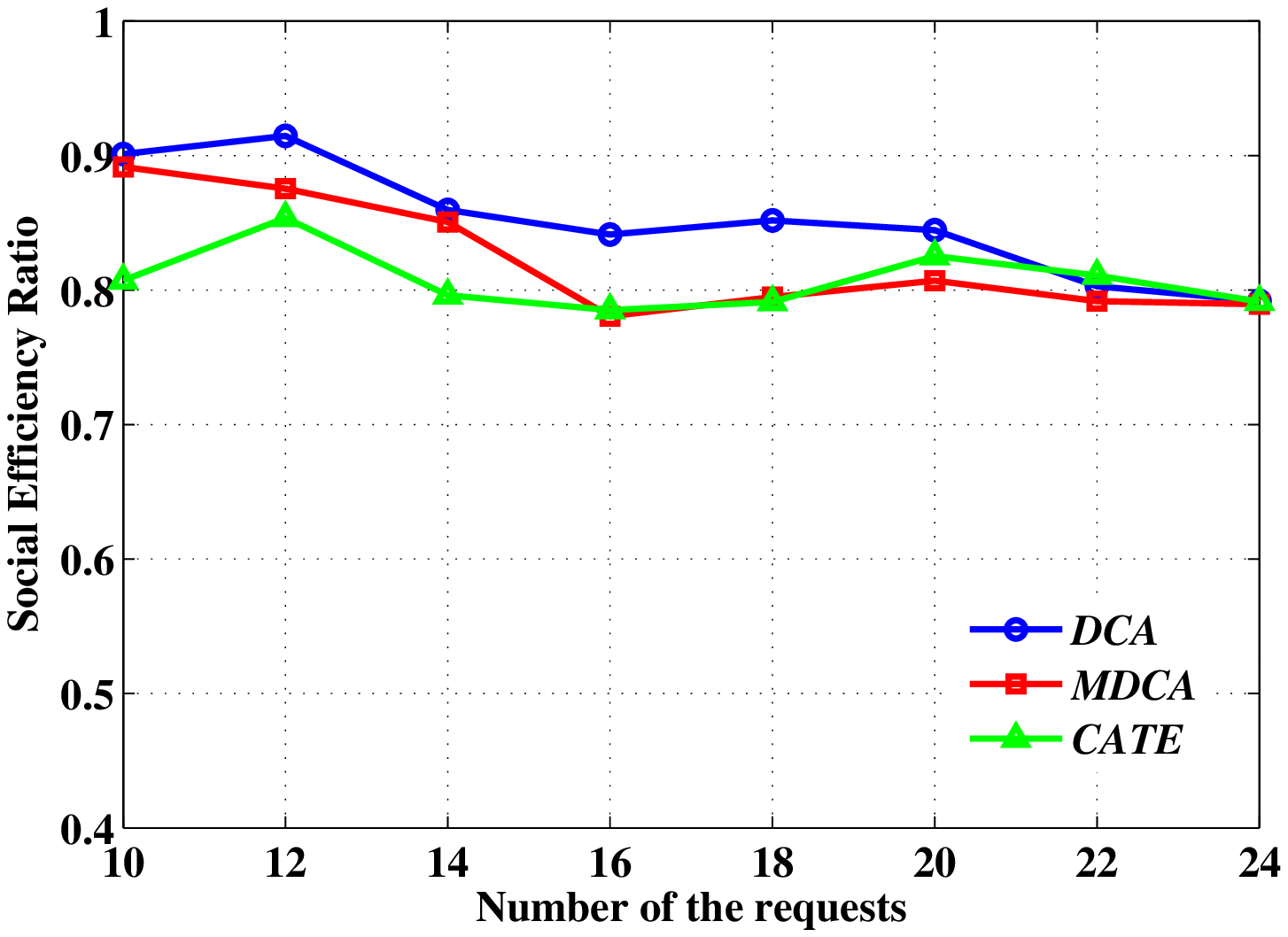}
    \small{\caption{Social Efficiency under Gaussian Distribution}}
  \end{minipage}%
\end{figure*}

\begin{figure*}
  \centering
  \begin{minipage}[t]{0.33\textwidth}\label{fig:RevRatioUni}
   \centering
    \includegraphics[width=6cm,height=3.5cm]{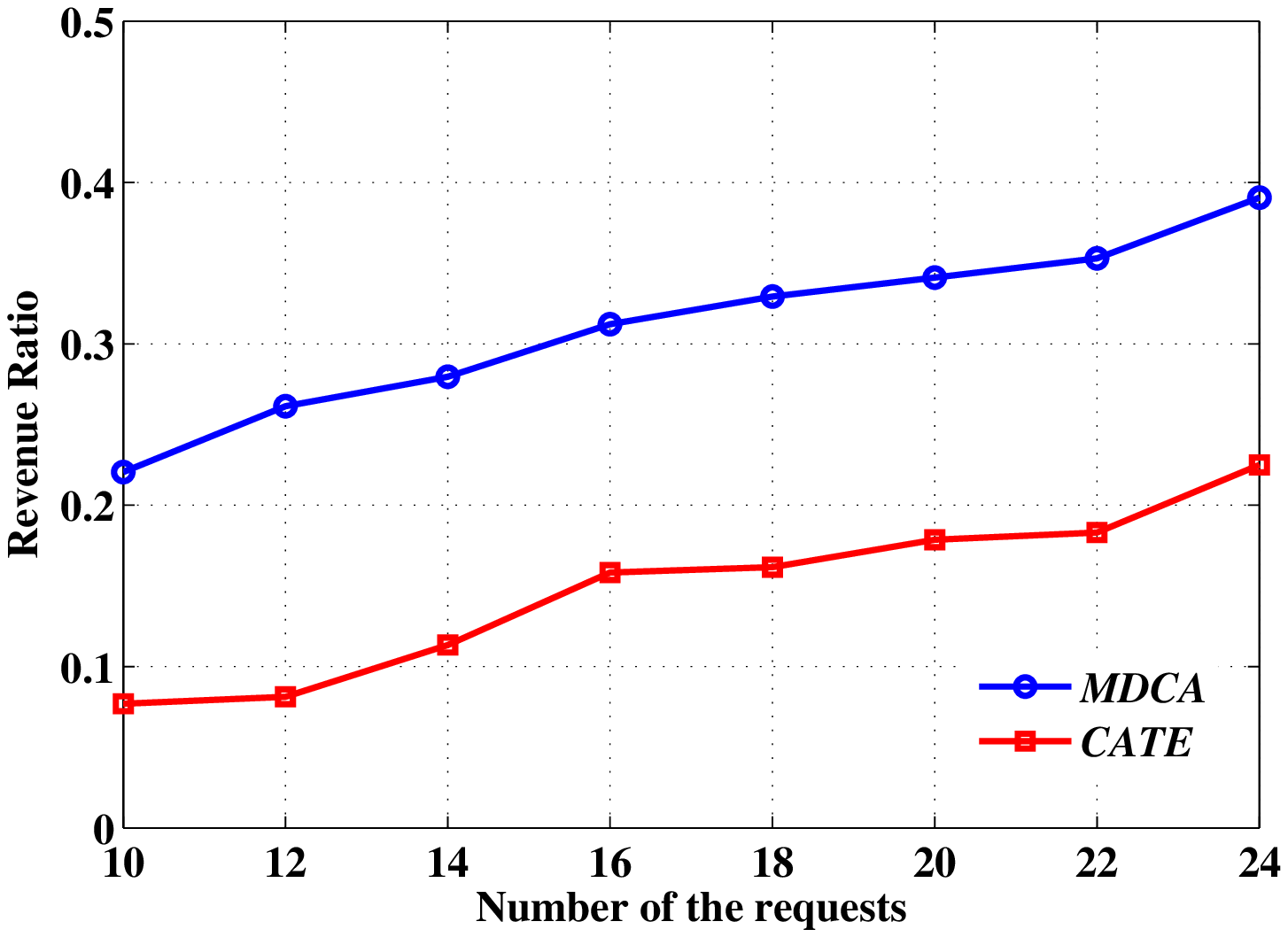}
    \small{\caption{Revenue Ratio under Uniform Distribution}}
  \end{minipage}%
   \begin{minipage}[t]{0.33\textwidth}\label{fig:RevRatioExp}
    \centering
    \includegraphics[width=6cm,height=3.5cm]{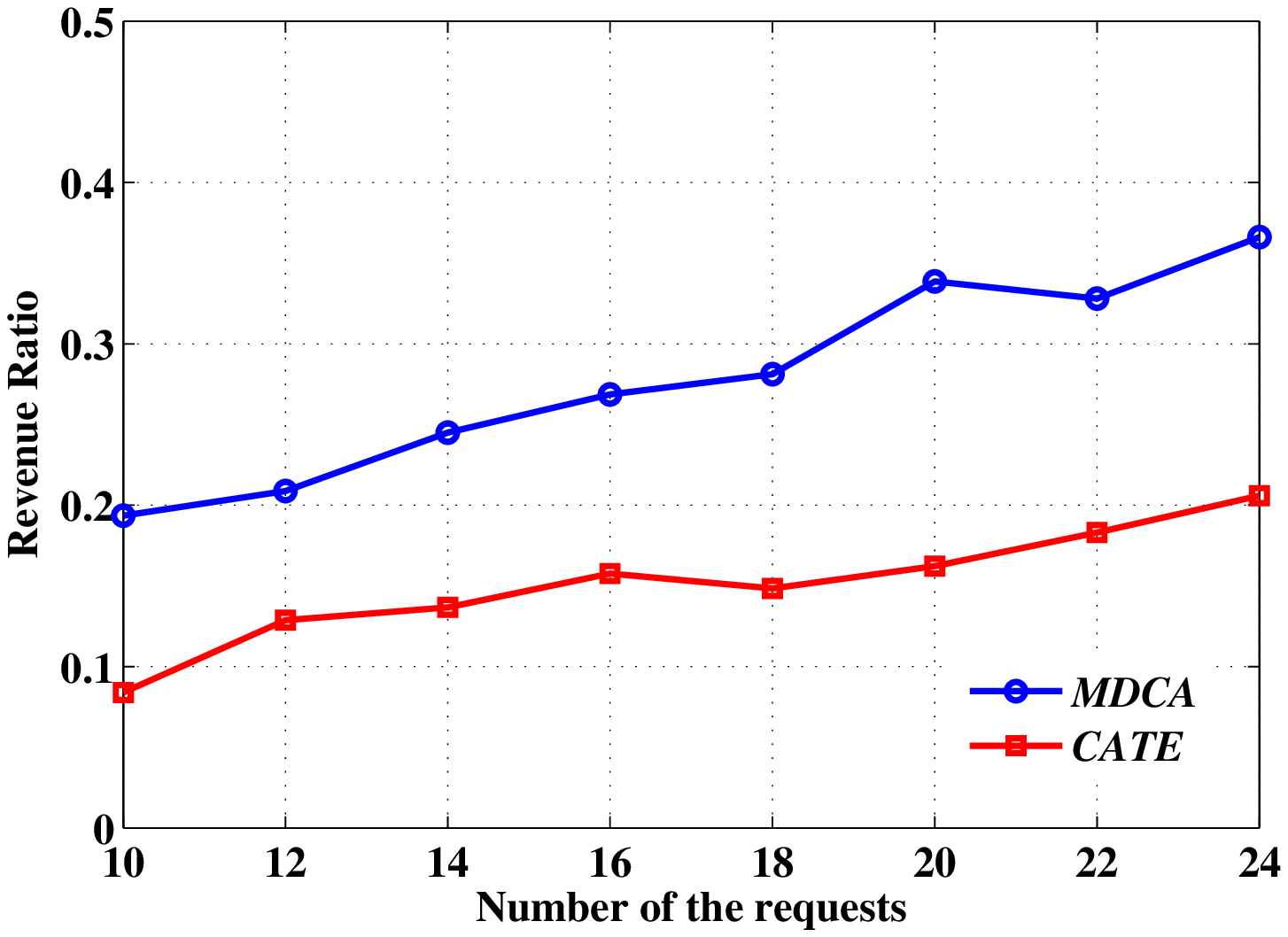}
    \small{\caption{Revenue Ratio under Exponential Distribution}}
  \end{minipage}%
  \begin{minipage}[t]{0.33\textwidth}\label{fig:RevRatioGau}
    \centering
    \includegraphics[width=6cm,height=3.5cm]{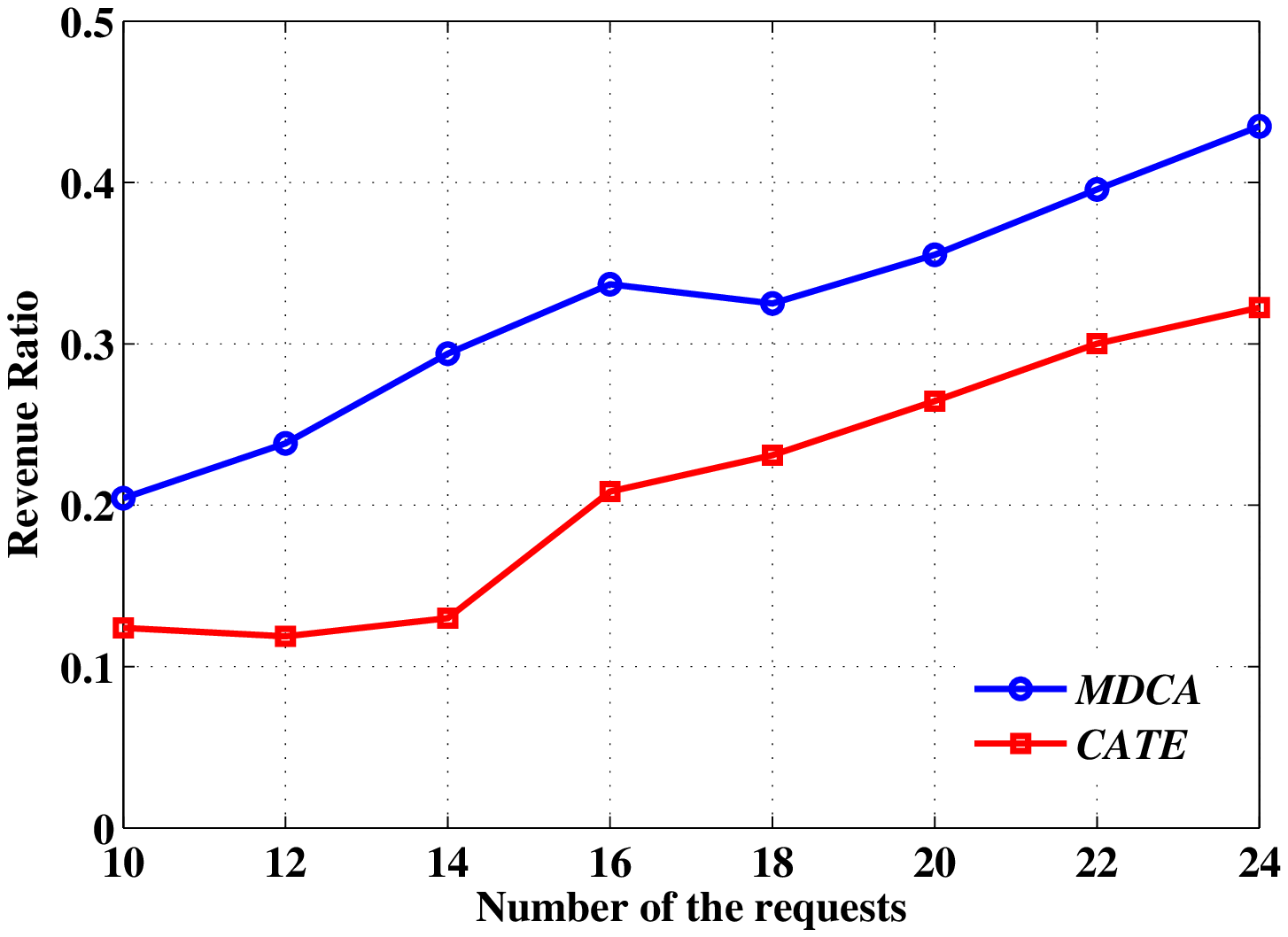}
    \small{\caption{Revenue Ratio under Gaussian Distribution}}
  \end{minipage}%
\end{figure*}

\subsection{Performance analysis}
We set the interference radius for each channel to be equal to $30$, and run our mechanisms under three types of bids
distributions (uniform, exponential and Gaussian) in Figures $2$-$7$.

In Figures $2$-$4$, we plot the social efficiency ratio of our channel allocation mechanisms (DCA, MDCA and CATE)
and the optimal allocation mechanism. Unsurprisingly,
the performances of DCA and MDCA are better than those of CATE. That's because DCA and MDCA always get a solution
whose value is larger than $1-1/e$ times of the optimal one, while the solutions of CATE may be arbitrarily bad. However,
our simulation results are much better than the theoretical bound we proved
in the previous sections. Even the solution of CATE is always larger than $70\%$ of the optimal solution.

From Figures $2$-$4$, we can also learn that the social efficiency ratio is declined slightly with the increasing number
of requests. The reason for that is most of requests can be allocated in channels without conflicting with others when
there are only few requests. Thus, our approximate auction mechanisms (DCA, MDCA and CATE) perform almost as well as the optimal one.
Since the competition among requests increases as the number of requests increases, the optimal auction mechanism outperforms DCA,
MDCA and CATE gradually. The social efficiency ratio keeps approximately stable when the number of requests is large enough,
that is, the supply is much less than the demand.

Observe that computing the optimal revenue is an NP-hard problem and the optimal social efficiency is an upper bound of revenue,
we define the revenue ratio to be the ratio of the total payments of winners and the optimal social efficiency. Since we cannot
prove that DCA is bid monotone, we cannot compute the critical payment for each winner either. Therefore, we only plot the revenue ratio of
MDCA and CATE in Figures $5$-$7$. As shown in these figures, the revenue ratio of primary user is increased with the number of requests
when the reservation price keeps unchanging. That is because the payment of each winner in our auction mechanisms is its critical value,
which is increased with the competition among requests.

%On the other hand, if we keep the number of requests unchanging, the revenue is increased with the reservation price
%$\eta^{\phi}$ first until reaching a maximum value, after that it decreases with $\eta^{\phi}$ (see Figure $10$). Since we use the
%theory of optimal auction to maximize the expected revenue, the revenue is maximized when $\eta^{\phi}=0$ in exponential and
%Gaussian distribution.

%In Figures $7$-$9$, we compare the spectrum utilization efficiency in the case of our model with the case of spectrum reusable
%only in spatial, or temporal domain. We define spectrum utilization ratio to be the ratio of the time length allocated to
%the winners and the total time length of all the channels available. We run all these cases using our MGCA mechanism. Obviously,
%the spectrum utilization efficiency in our model is much better than the cases that spectrum can only be reused in spatial
%or temporal domain.

%\begin{figure}[!t]
%  \centering
%    \begin{minipage}[t]{0.25\textwidth}\label{fig:spectrumutilizationnorm}
%    \centering
%    \includegraphics[width=4cm,height=2.5cm]{norm_TA.eps}
%    \small{\caption{Spectrum Utilization Efficiency under Gaussian Distribution}}
%  \end{minipage}%
%      \begin{minipage}[t]{0.25\textwidth}\label{fig:reservation}
%    \centering
%    \includegraphics[width=4cm,height=2.5cm]{reservation.eps}
%    \small{\caption{Revenue ratio under different $\eta^{\phi}$}}
%  \end{minipage}%
%\end{figure}

%\begin{figure}[!t]
%\centering
%\includegraphics[width=2.3in]{reservation.eps}
%\caption{XXX.}
%\label{fig:reservation}
%\end{figure}

\section{Literature Reviews}\label{sec:review}
Auction theory, regarded as a subfield of economics and game theory, serves as an efficient, fair way to distribute scarce resources amongst competing users. Recent years, auctions have been extensively studied in the scope of spectrum allocation. Many studies on spectrum auctions \cite{gandhi2007general}, \cite{huang2006auction}, \cite{ryan2006new} have been proposed to cope with the dynamic spectrum access problem in different perspectives on optimization goal, such as maximizing the total utility or minimizing the spectrum interference.

Truthfulness (or strategyproofness) is considered as one of the most critical factors to attract participation in the design of auction mechanism. Nevertheless, none of the earlier studies address this issue. Although large amount of studies are designed aiming at achieving economical robustness (\eg \cite{babaioff2001concurrent}, \cite{mcafee1992dominant}, \cite{vickrey1961counterspeculation}), these traditional auctions will lose the truthful property when they are directly applied to spectrum auctions due to some constraints, such as spatial and temporal reuse of spectrum. Meanwhile, some well known auction mechanisms (\emph{such as VCG}) will lose the truthfulness when applied to suboptimal algorithms.
In \cite{zhou2008ebay}, truthfulness is first designed for spectrum auction, where the spatial reuse is considered. \cite{al2011truthful} and \cite{jia2009revenue} focus on maximizing revenue for the auctioneer; \cite{gopinathanstrategyproof} studies the fairness and economic feasibility in spectrum auction model to achieve the global fairness and truthfulness; Zhou \etal \cite{zhou2009trust} first takes the McAfee double auction model into spectrum allocation to achieve the economic robustness.

Spectrum is a local resource, and it usually trades within its license region in a secondary markets. \emph{District} mechanism \cite{wangdistrict} first takes the spectrum locality into consideration and proposes an economically robust double auction method. \cite{fengtahes} proposes a truthful auction model for heterogeneous spectrum trading with the consideration of spectrum locality. As another line of spectrum reuse, \cite{deek2011preempt}, \cite{wang2010toda}, \cite{xu2011efficient}, \cite{xu2011truthful} study the spectrum allocation in an online model. The temporal reuse is adopted in these online model researches. However, combination of spectrum locality and temporal reuse is not fully mentioned in the previous studies. Dong \etal \cite{dong2012combinatorial} tackles spectrum auction by introducing combinatorial auction model which achieves time-frequency flexibility, however, the authors do not consider spatial reuse and spectrum locality property in their work.
Our work essentially generalizes all of the above challenges in the auction design. 

%\section{Literature Reviews}\label{sec:review}
%\input{review.tex}

\section{Conclusion}\label{sec:conclusion}
In this paper, we have studied the case that spectrum can not only be reused in spatial domain, but also in
temporal domain. We have designed a general truthful spectrum auction framework which can maximize
the social efficiency or revenue. As allocating channels optimally is NP-hard in our model, we have also proposed
a set of near-optimal channel allocation mechanisms with $(1-1/e)$ performance guarantee.

Several interesting questions are left for future research. The first one is  to design a spectrum auction
mechanism that can guarantee a good approximation and an efficient practical running time. The second one
is to relax the time request model from the fixed interval model we studied in this paper to a more general one.
The third challenging question is  to design truthful mechanisms with good performance guarantee when we
have to make online decisions.

\section*{Acknowledgement}

The research of authors is partially supported by the National Grand
Fundamental Research 973 Program of China (No.2011CB302905,
No.2011CB302705), National Natural Science Foundation of China (NSFC)
under Grant No. 61202028, No. 61170216, No. 61228202, and NSF
CNS-0832120, NSF CNS-1035894, NSF ECCS-1247944, Specialized Research Fund for the Doctoral Program of Higher Education (SRFDP) under Grant No. 20123201120010.

{\small
% \bibliographystyle
%\bibstyle{acm}
\bibliography{auction}
}
\end{document}